\colorlet{shadecolor}{yellow}
\newtheorem{lemma}{Lemma}
\def\BibTeX{{\rm B\kern-.05em{\sc i\kern-.025em b}\kern-.08em
    T\kern-.1667em\lower.7ex\hbox{E}\kern-.125emX}}
\title{
A Plug-and-Play Method for Improving Imperceptibility and Capacity in Practical Generative Text Steganography
}
 \author{Kaiyi Pang}
\begin{document}
\begin{CJK*}{UTF8}{gbsn}
\maketitle

\begin{abstract}

Linguistic steganography embeds secret information into seemingly innocuous text to safeguard privacy under surveillance. Generative Linguistic Steganography (GLS) leverages the probability distributions of Language Models (LMs) and applies steganographic algorithms during generation. GLS has attracted increasing attention with the rise of Large Language Models (LLMs). To strengthen security, prior work has focused on distribution-preserving steganographic algorithms that minimize the gap between stego sampling and random sampling from the model. However, their reliance on model distributions, which often deviate from real-world cover texts, leads to limited imperceptibility when facing steganalysis detectors in practical settings. Moreover, LLM distributions tend to be more deterministic, reducing entropy and thus lowering embedding capacity.
In this paper, we propose a plug-and-play method that reconstructs the distributions of language models used for generative linguistic steganography. FreStega dynamically adjusts token probabilities from the language model at each step of autoregressive stego text generation, leveraging both sequential and spatial dimensions. Extensive experiments on four LLMs, three benchmark datasets, and four distribution-preserving steganographic baselines demonstrate that, by reforming the distribution, FreStega improves the imperceptibility of stego text in realistic scenarios and increases steganographic capacity by 15.41\%, without degrading the quality of the generated stego text.

\end{abstract}

\begin{IEEEkeywords}
   Text Steganography, Large Language Model, Linguistic Steganalysis
\end{IEEEkeywords}

\section{Introduction}
Steganography embeds secret messages within seemingly innocuous carriers, enabling private communication while reducing the risk of detection under pervasive surveillance.
Among the various carriers used in the information age, including images \cite{yang2023provably} and audio \cite{gopalan2003audio}, text plays an important role in most Internet scenarios because of its simplicity, versatility, and robustness. Consequently, text steganography has attracted considerable attention \cite{zhang2021provably, meteor2021, ding2023discop, ziegler2019neural}.

With the rapid advancement of language models, generative linguistic steganography has emerged as a mainstream paradigm. As illustrated in Figure \ref{fig:general framework}, it approximates the steganographic channel using language models and embeds secret messages by controlling token selection during autoregressive text generation.
State-of-the-art distribution-preserving steganography algorithms \cite{meteor2021, ding2023discop} ensure that tokens generated through random sampling from the model are computationally indistinguishable from those generated via steganography.

\begin{figure}
    \centering
    \includegraphics[width=1\linewidth]{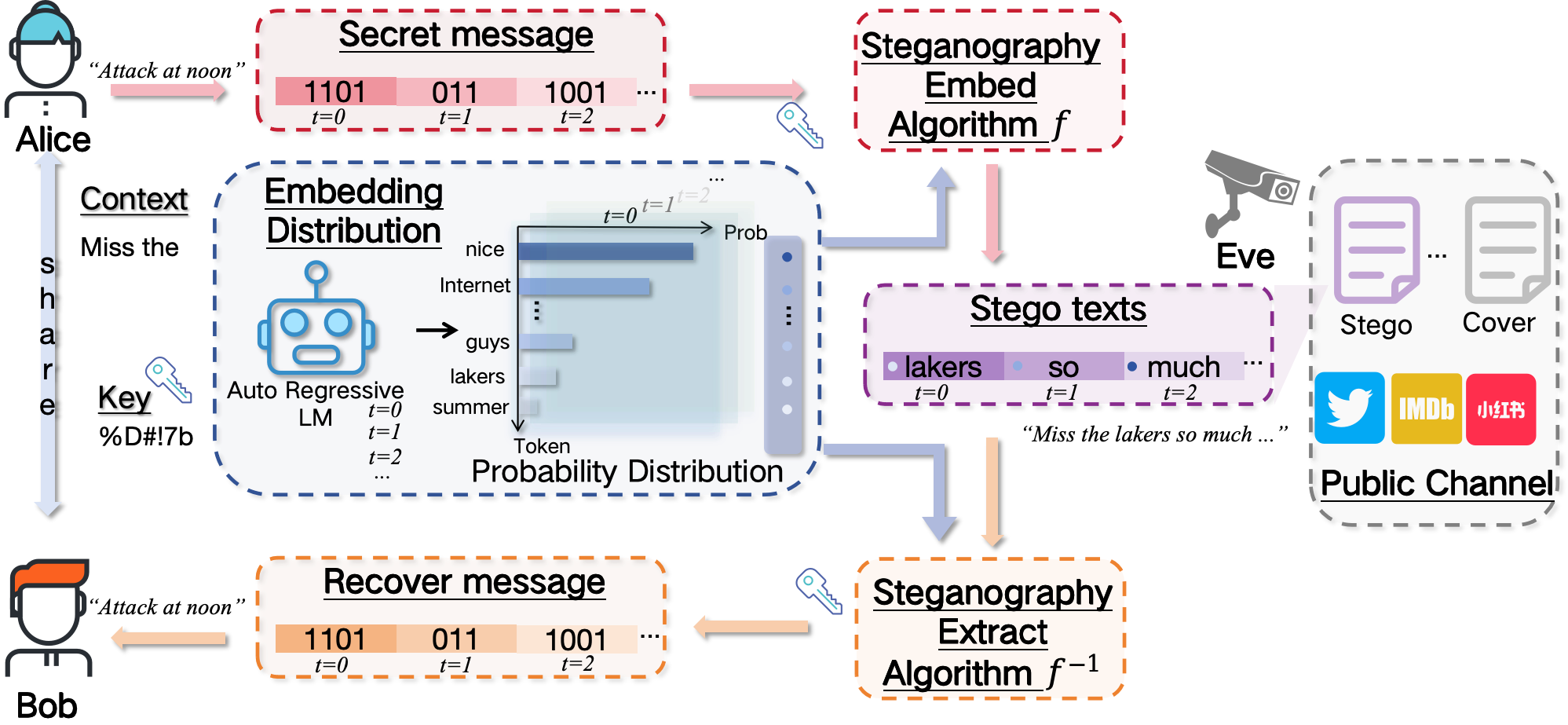}
    \caption{Generative Linguistic Steganography Framework: Alice and Bob covertly communicate over a public channel monitored by Eve. Alice embeds secret information during the language model's token selection process to generate stego text. Using a shared key, language model, and prompt, Bob then recovers the secret information upon receiving the stego text.}
    \label{fig:general framework}
\end{figure}

\begin{figure*}[!t]
    \vspace{-0.8em}
    \centering
    \includegraphics[width=1.0\linewidth]{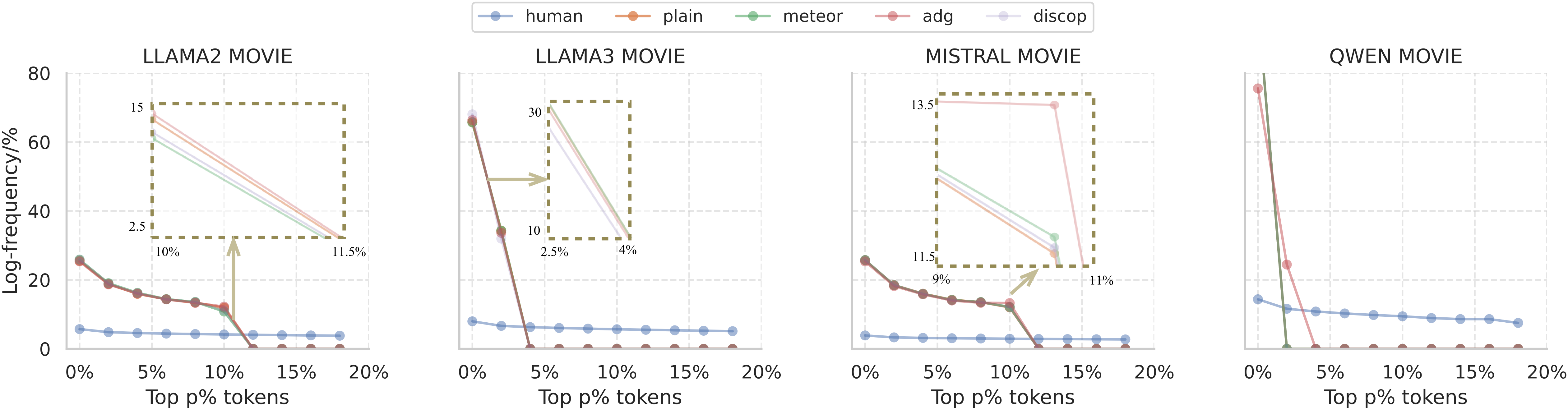}
    \caption{
Token frequency distribution comparison among cover text, random sampling (plain), and stego texts generated by METEOR \cite{meteor2021}, ADG \cite{zhang2021provably}, and DISCOP \cite{ding2023discop} on IMDB \cite{IMDB}. Each model uses its own tokenizer, so the tokenized cover distribution may vary slightly across models.}
    \label{frequency}
    \vspace{-0.6em}
\end{figure*}
\begin{figure}[!t]
    \vspace{-0.8em}
    \centering
    \includegraphics[width=1\linewidth]{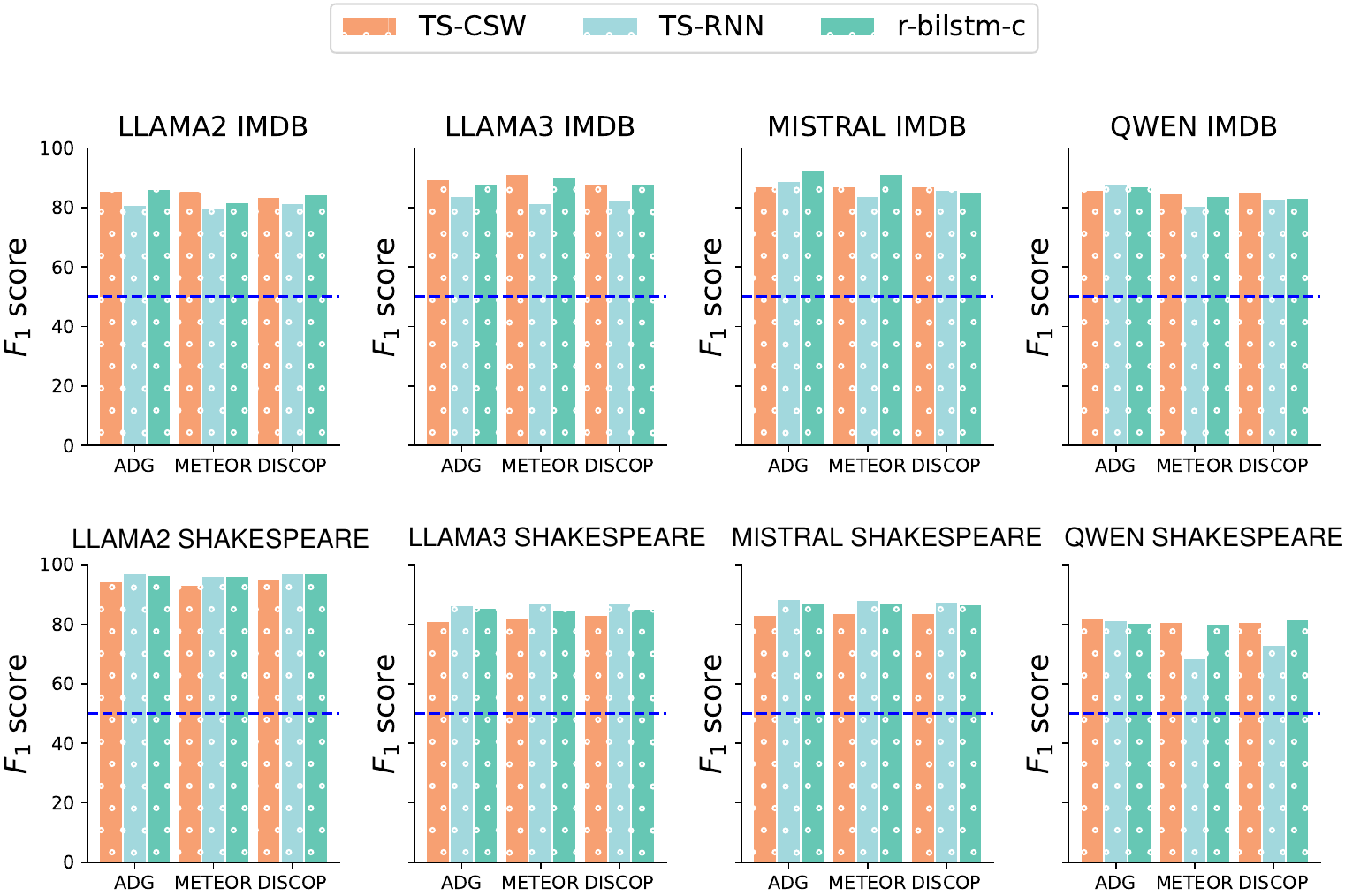}
    \caption{\textcolor{black}{F1 classification scores of state-of-the-art steganography methods (ADG \cite{zhang2021provably}, METEOR \cite{meteor2021}, DISCOP \cite{ding2023discop}) against classic steganalysis classifiers (TS-CSW \cite{TS-CSW}, TS-RNN \cite{TS-RNN}, and R-BiLSTM-C \cite{r-bilstm-c}) on the IMDB \cite{IMDB} and SHAKESPEARE \cite{shakespere} in real-world scenarios.}}
    \label{steganalysis}
    \vspace{-0.6em}
\end{figure}

Despite significant advancements, generative linguistic steganography methods still face practical challenges in the era of large language models (LLMs), notably the \textbf{de facto imperceptibility challenge} and the \textbf{low-capacity challenge}.
Despite the fluency of texts generated by large language models, even the most advanced generative steganography schemes produce texts that can be detected and blocked by automated detectors in real-world applications like social networks, often resulting in account suspensions—even before considering more advanced steganalysis techniques.
\textbf{\textcolor{black}{Ensuring stego text is indistinguishable from randomly sampled language model text is not sufficient for real-world applications.}}
In real-world scenarios, censors can access extensive real-world human data from the domain as cover samples to train stego detectors, a common setup in existing steganalysis tasks \cite{yang2022tifs,wang2023linguistic,xue2023adaptive}.
To verify this, we tested several distribution-preserving steganography algorithms (ADG \cite{zhang2021provably}, METEOR \cite{meteor2021}, DISCOP \cite{ding2023discop}) with popular language models like \textsc{Qwen2} \cite{qwen}, \textsc{Mistral v0.3} \cite{mistral}, \textsc{Llama2}, and \textsc{Llama3} \cite{llama2,llama3}. As shown in Figures \ref{frequency} and \ref{steganalysis}, stego texts generated by state-of-the-art distribution-preserving algorithms \cite{meteor2021,ding2023discop} exhibit discrepancies from real-world cover texts (e.g., real comments on IMDB) in both word frequency distribution and steganalysis performance. The steganalysis follows the mainstream paradigm \cite{yang2022tifs,wang2023linguistic,xue2022effective,xue2023adaptive,TS-CSW}, focusing on distinguishing real cover texts in the target domain from stego texts. This occurs because existing language model distributions do not match cover texts well, and distribution-preserving methods rely entirely on these misaligned distributions. Biased language model estimations lead to a noticeable divergence between stego texts and real cover texts.

Moreover, the embedding capacity of existing generative steganography methods is constrained by the inherent distribution of the language model. Since LLMs are currently the best estimators of text channels, generative text steganography predominantly utilizes distributions generated by LLMs for steganographic sampling.
LLMs tend to produce more deterministic responses compared to smaller models \cite{lowentropy,liao2024co}, as RLHF training often reduces the entropy of their distributions. This decrease in entropy consequently limits their embedding capacity.

In summary, existing distribution-preserving algorithms are heavily dependent on the language model's distribution. However, current language model distributions still struggle to meet the practical demands of steganography in terms of imperceptibility and capacity.
To alleviate the imperceptibility and capacity limitations of generative linguistic steganography based on imperfect language-model distributions, we propose FreStega, a plug-and-play distribution reformation method.
We reform the model's distribution by dynamically adjusting the sharpness of the prediction distribution at each time step (we call this \textbf{Sequential Adjustment}) and by adjusting the probability of each token using real cover text as guidance (we call this \textbf{Spatial Adjustment}). We conducted a preliminary exploration of certain methods within spatial adjustment \cite{pang2024fremax} in our previous work. Building on that foundation, we further optimized the spatial adjustment method and introduced sequential adjustment.
Sequential Adjustment modifies the temperature based on instantaneous entropy, reducing distribution sharpening and enhancing the diversity and embedding capacity of the stego text. Spatial Adjustment boosts the probabilities of tokens common in the target domain while suppressing overconfidence in the language model, enabling quick adaptation to target-domain distributions and enhancing imperceptibility.

Our main contributions are summarized as follows:

\begin{itemize}
    \item \textbf{Plug-and-play and Model-Agnostic Design}:
    FreStega can be seamlessly integrated with existing generative linguistic steganography algorithms in a plug-and-play manner, including grouping-based methods such as ADG \cite{zhang2021provably} and DISCOP \cite{ding2023discop}, as well as AC-based methods \cite{ziegler2019neural,shen2020near}. \textcolor{black}{FreStega operates during token decoding, effectively aligning stego text generated by an autoregressive language model with the target domain without requiring any additional training.}
    \item \textbf{Dual-dimensional Dynamic Distribution Adjustment}:
 FreStega dynamically adjusts the probability distribution of language models in both sequential and spatial dimensions. This effectively approximates the distribution of the target domain, mitigating the de facto imperceptibility challenge of existing steganography methods caused by their heavy reliance on language model distributions when facing steganalysis in real-world application settings.

    \item \textbf{Effectiveness and Flexibility of Alignment}:
Extensive experiments show that FreStega reduces distribution divergence between stego and cover texts, increasing imperceptibility against steganalysis detectors in real-world scenarios while enhancing capacity without compromising linguistic quality. FreStega is also highly flexible, requiring only about 100 samples to achieve effective domain alignment in Spatial Adjustment. Even without access to target-domain text, Sequential Adjustment can still improve the quality of stego text.

\end{itemize}

\section{Method}

\subsection{Narrowing the Gap Toward Real-World Scenarios}
When applying generative text steganography schemes in real-world scenarios, a fundamental consideration is that the generated stego texts should seamlessly fit into the environment.
Specifically, the stego sampling distribution \( p_S \) should closely resemble the target-domain cover distribution $p_{\mathcal{E}}$:
$
p_S = \arg\min_{p_S} D_{KL}(p_{\mathcal{E} }\| p_S).
$
We denote by \( p_{\mathcal{M}} \) the original language model distribution, which represents the random sampling process of the language model \(\mathcal{M}\). The embedding process introduces noise \( \epsilon \), leading to a new distribution \( p_S = p_{\mathcal{M}}+ \epsilon \),
The divergence from the target-domain distribution can be decomposed exactly as:
\begin{equation}
\begin{split}
D_{KL}(p_{\mathcal{E}} \| p_S) & = D_{KL}(p_{\mathcal{E}} \| p_M + \epsilon) \\
& = D_{KL}(p_{\mathcal{E}} \| p_M) \\
&\quad -\sum_x p_{\mathcal{E}}(x)
\log\left(1+\frac{\epsilon(x)}{p_M(x)}\right).
\end{split}
\end{equation}
For an ideal statistically secure distribution-preserving steganography algorithm, \(\epsilon = 0\) is considered theoretically achievable, meaning that \( D_{KL}(p_S \| p_M) = 0 \), and the quality of the stego text is primarily determined by \( p_M \).
However, reducing $ D_{KL}(p_S \| p_M)$ alone is not sufficient for higher imperceptibility in real-world scenarios.
This also highlights the issue of heavy dependence on the language model distribution discussed earlier. The security and quality of the stego text largely depend on $D_{KL}(p_{\mathcal{E}} \| p_M)$.
Narrowing the gap between the language model distribution \( p_M \) and the target-domain text distribution \( p_{\mathcal{E}} \) improves model--environment alignment for both distribution-preserving and non-distribution-preserving algorithms. For distribution-preserving methods, the stego sampling distribution closely follows the input model distribution, so improving $p_M$ directly improves $p_S$. For non-distribution-preserving methods, the encoding procedure introduces additional sampling noise, but a better-aligned base distribution still reduces the model--environment component $D_{KL}(p_{\mathcal{E}} \| p_M)$.

\textit{KL Divergence of the Non-Distribution-Preserving Algorithm}
\label{kl}
For a non-distribution-preserving algorithm, suppose that for each value \(x\) in the support there exists \(0<\delta(x)\leq 1/3\) such that \(\delta(x)\leq p_{\mathcal{M}}(x),p_{\mathcal{S}}(x)\leq 1-\delta(x)\). We then have:
\begin{equation}
\begin{aligned}
D_{KL}(p_{\mathcal{E}}\|p_{{\mathcal{S}}})
&=\sum_x p_{\mathcal{E}}(x)\left[\log\frac{p_{\mathcal{E}}(x)}{p_{\mathcal{M}}(x)}\right.\\
&\qquad\left.-\log\left(1+\frac{\epsilon(x)}{p_{\mathcal{M}}(x)}\right)\right].
\end{aligned}
\end{equation}
The magnitude of the latter term is bounded as in Equation~\ref{math}, with a detailed proof in Lemma~\ref{lemma1}.
\begin{equation}
\begin{aligned}
\left|\log\left(1+\frac{\epsilon (x)}{p_{\mathcal{M}}(x)}\right)\right|
&=\left|\log\left(1+\frac{p_{\mathcal{S}}(x)-p_{\mathcal{M}}(x)}{p_{\mathcal{M}}(x)}\right)\right|\\
&\leq \frac{1-2 \delta(x)}{\delta(x)}.
\end{aligned}
\label{math}
\end{equation}
This decomposition separates the model--environment mismatch from the sampler-induced perturbation.
\begin{equation}
\begin{aligned}
D_{KL}(p_{\mathcal{E}}\|p_{\mathcal{S}})
&\geq D_{KL}(p_{\mathcal{E}}\|p_{\mathcal{M}})\\
&\quad-\sum_{x}\frac{1-2\delta(x)}{\delta(x)}p_{\mathcal{E}}(x).
\end{aligned}
\end{equation}

\begin{lemma}
Let \(0<\delta\leq 1/3\). If \( \delta \leq p, q \leq 1-\delta \), then the following holds:
\[
    \left| \log\left(1 + \frac{q - p}{p}\right) \right| \leq \frac{1 - 2\delta}{\delta}.
\]
By symmetry, the same inequality applies to \( \left| \log\left(1 - \frac{q - p}{1 - p}\right) \right| \).
\label{lemma1}
\end{lemma}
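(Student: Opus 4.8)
The plan is to recognize that the argument of the logarithm collapses to a simple ratio, after which everything reduces to a one-line elementary estimate. First I would observe that
\[
1 + \frac{q-p}{p} = \frac{q}{p},
\]
so the quantity to be bounded is simply $\left|\log\frac{q}{p}\right| = \left|\log q - \log p\right|$. This strips away the apparent complexity of the expression and turns the problem into controlling the logarithm of a ratio of two numbers both confined to the interval $[\delta,\,1-\delta]$.

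Next I would bound the ratio itself using only the box constraints. Since $\delta \le p \le 1-\delta$ and $\delta \le q \le 1-\delta$, the ratio is extremal at the corners: $\frac{q}{p} \le \frac{1-\delta}{\delta}$ and $\frac{q}{p} \ge \frac{\delta}{1-\delta}$. Because $\left|\log\frac{q}{p}\right|$ equals $\log\frac{q}{p}$ when $q \ge p$ and $\log\frac{p}{q}$ when $p > q$, and because both $q/p$ and $p/q$ are at most $\frac{1-\delta}{\delta}$, I obtain the uniform bound $\left|\log\frac{q}{p}\right| \le \log\frac{1-\delta}{\delta}$ in either case.

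The final and only substantive step is to convert this logarithmic bound into the stated linear one via the elementary inequality $\log t \le t-1$, valid for all $t>0$. Setting $t = \frac{1-\delta}{\delta}$ yields
\[
\log\frac{1-\delta}{\delta} \le \frac{1-\delta}{\delta} - 1 = \frac{1-2\delta}{\delta},
\]
which is exactly the claimed bound. For the symmetric statement I would note that $1 - \frac{q-p}{1-p} = \frac{1-q}{1-p}$, and that $1-p$ and $1-q$ also lie in $[\delta,\,1-\delta]$, so the identical argument applies verbatim with $p,q$ replaced by $1-p,1-q$.

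I expect no real obstacle: the only things to get right are the algebraic simplification to $q/p$ and the choice of $t$ in the $\log t \le t-1$ estimate. It is worth flagging explicitly that the hypothesis $|p-q| \ge \delta$ is \emph{not} needed for this bound — the estimate follows purely from $\delta \le p,q \le 1-\delta$. The considerable slack in the last step (the inequality $\log t \le t-1$ is far from tight when $t=(1-\delta)/\delta$ is large for small $\delta$) confirms that the stated bound, while valid, is quite loose.
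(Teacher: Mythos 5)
Your proof is correct and takes essentially the same route as the paper: both reduce to the extremal values of the ratio over the box $[\delta,1-\delta]^2$ and then apply the tangent-line bound $\log t \le t-1$ (the paper phrases it as $\log(1+x)\le x$ for $q>p$ and $\log(1-x)\ge -x/(1-x)$ for $q<p$). Your version is a clean streamlining — collapsing the argument to $q/p$ unifies the two sign cases into one application of the inequality — and your observation that the hypothesis $|p-q|\ge\delta$ is never used is accurate and worth noting, since the paper's own two-case writeup contains typographical slips (e.g.\ defining $z$ as the ratio but then equating it to a logarithm) that your formulation avoids.
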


\begin{proof}
Let \(R=(1-\delta)/\delta\). Since \(0<\delta\leq 1/3\), we have \(R\geq 2\). From \(p,q\in[\delta,1-\delta]\), it follows that \(R^{-1}\leq q/p\leq R\), and hence
\[
\left|\log\left(1+\frac{q-p}{p}\right)\right|
=\left|\log\frac{q}{p}\right|
\leq \log R.
\]
For base-2 logarithms, \(\log R\leq R-1\) for \(R\geq2\): indeed, \(h(R)=R-1-\log R\) satisfies \(h(2)=0\) and \(h'(R)=1-1/(R\ln2)>0\) on this interval. Therefore,
\[
\left|\log\frac{q}{p}\right|\leq R-1=\frac{1-2\delta}{\delta}.
\]
Replacing \((p,q)\) with \((1-p,1-q)\), which also lie in \([\delta,1-\delta]\), proves the symmetric inequality.
\end{proof}

There are two main intuitive approaches to generating stego text that align with the characteristics of the target domain \(\mathcal{E}\).

The first approach is to \textbf{fine-tune} or \textbf{refactor} \(\mathcal{M}\) into $\mathcal{M}_\mathcal{E}$ \cite{trainmodel2,trainmodel3} using a large corpus from \(\mathcal{E}\), allowing the model to learn domain-specific characteristics during targeted training. The second approach is to \textbf{optimize prompts}, including prompt-tuning or instruction-guiding methods \cite{prompttuning1,prompttuning2}, which involve finding a suitable prompt or its soft embedding. This is more effective for large models with strong instruction-following capabilities \cite{prompt-direct,prompt-direct2}, enabling them to generate text that reflects the style of \(\mathcal{E}\).
However, because steganography applications operate in complex environments where it is difficult to obtain direct and reliable supervision signals or explicit optimized prompts, to the best of our knowledge there has been no prior work that directly optimizes an LLM for the steganography task.

Our method works during the \textbf{decoding stage} of language models and improves the decoding distribution \(P(x_t|x_{<t},\mathcal{H}_\mathcal{E})\) produced by either fine-tuned or prompt-guided models. \textcolor{black}{ In other words, \textbf{our method is compatible with both fine-tuning and prompt-based approaches.} The compatibility analysis is provided in the supplementary material.}
Most existing controllable text generation methods \cite{pplm,GEDI,FUDGE} at the decoding stage focus on explicit semantic attribute control and often require additional specially designed classifiers \cite{pplm,FUDGE}. \textcolor{black}{To our knowledge, there have been only two prior attempts to apply decoding-phase methods for stego text generation. One, by \cite{chencheng}, builds on PPLM \cite{pplm}, but this approach requires dimensional alignment between the classifier and the language model, making it difficult to scale to current LLMs. Another preliminary attempt \cite{pang2024fremax} adjusts the softmax to infuse a certain target-domain style into the generated stego text. However, this method overlooks capacity issues from the low entropy of LLMs and fails to suppress their inherent AI style.}

\subsection{Overview}
Our objective is twofold: to mitigate the intrinsic bias in language models, thereby producing stego text with greater diversity and a style closer to cover texts in the application domain, and to increase the entropy of generated text while preserving the linguistic quality of generated stego texts.

We adjust the language model distribution \(p_M\) for better
steganographic sampling by addressing two key aspects: aligning it more closely with $p_{\mathcal{E}}$ to enhance practical imperceptibility, and mitigating the sharpening effect to slightly increase entropy while preserving linguistic quality, thereby boosting embedding capacity.

\begin{figure}[htbp]
    \centering
    \includegraphics[width=1\linewidth]{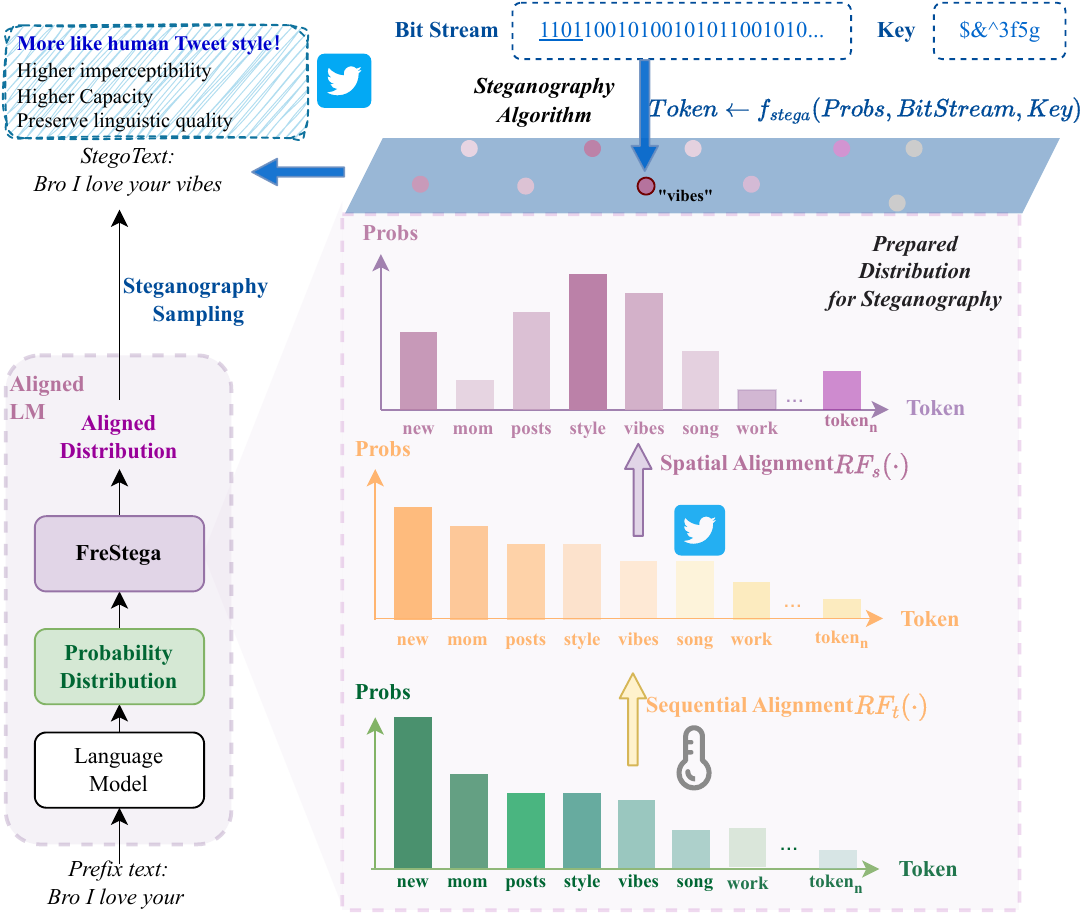}
  \caption{Overview of FreStega: We first adjust the temperature at each time step based on the instantaneous entropy in the sequential dimension. Then, we refine the language model’s distribution in the spatial dimension, aligning it with the target-domain cover distribution.}
    \label{fig:method}
\end{figure}

Consider a text sequence $\textbf{x}=x_{0},x_{1},...,x_{N-1}$, where each $x_{i}=w_k\in \mathscr{V}$ is a token from the model's tokenizer vocabulary.
The autoregressive language model used as the base model for generative steganography models the distribution of a sentence $\textbf{x}$ as a product of conditional probabilities $p(x_i|\textbf{x}_{0:i-1})$ using the chain rule:
$
    p(x)=\prod_{i=0}^{N-1} p(x_i|\textbf{x}_{0:i-1}).
$

The probability of token $w_k$ at the $t$-th position in sequence \textbf{x} is computed by Equation \ref{Eq.5}:
\begin{equation}\label{Eq.5}
    \hat{p}\left(x_{t}=w_{k}|{\bf x}_{0:t-1}\right)=\frac{e^{{\bf h}^{t} \cdot {\bf o}^{t}_{k}}}{\Sigma_{w_{i}\in \mathscr{V}}\:\{e^{{\bf h}^{t} \cdot {\bf o}^{t}_{i}}\}},
\end{equation}
where ${\bf o}^{t}_{k}$ and ${\bf h}^{t}$ denote the output-layer vector and the hidden state of the LM at the $t$-th step in the generation process, respectively.

Our method modifies the conditional probability distribution $p(x_t|\textbf{x}_{<t})$ across both sequential and spatial dimensions to approximate the target-domain cover distribution, as shown in Figure \ref{fig:method}.
FreStega adjusts the original distribution $p_M$ to a distribution better suited for steganographic sampling algorithms.
The steganography algorithm does not require any modifications; it determines the next token based on the key and secret information within the new distribution.

In the sequential dimension, we mildly smooth the model's output to support a more diverse and natural flow of language. By incorporating this dynamic temperature regulation at every time step,
we aim to increase embedding capacity and creativity in the generated stego text while preserving fluency.

In the spatial dimension, we align the language model's distribution with the token frequency distribution of the target cover corpus.
It is worth noting that we use the token frequency distribution of the language model in our experiments because token frequency is the most direct marginal distribution of text. In fact, our method can be applied to other marginal distributions as long as the distribution can be modeled explicitly, such as sentiment or topic.

As a result, the language model is better adapted to generating texts that closely resemble target-domain cover texts. The pseudocode of the overall method is shown in Algorithm \ref{alg1}.
We elaborate on the sequential alignment and spatial alignment in Algorithm \ref{alg1} in Sections~\ref{sequential dimension} and~\ref{spatial dimension}, respectively.

\begin{algorithm}[t]
\caption{Probability distribution reformation and stego text generation}

\renewcommand{\algorithmicrequire}{\textbf{Input:}}
\renewcommand{\algorithmicensure}{\textbf{Output:}}
\label{alg1}
\begin{algorithmic}[1]
\REQUIRE target-domain corpus $\mathscr{D}$; model-generated corpus $\mathscr{M}$; language model $LM$; tokenizer $T$; token vocabulary $\mathscr{V}$; input text ${\bf x}_{0:t-1}$; steganographic algorithm $f$; shared key $\mathcal{K}$; secret bits $B$.

\ENSURE output text ${\bf x}$.
\STATE $\text{$\cdot\cdot\cdot$Initial frequencies}$
    \FOR{$v \in \mathscr{V}$}
    \STATE $D_{freq}(v) \leftarrow \frac{Count^{T(\mathscr{D})}(v)}{\#T(\mathscr{D})}$
    \STATE $M_{freq}(v) \leftarrow \frac{Count^{T(\mathscr{M})}(v)}{\#T(\mathscr{M})}$
    \ENDFOR
    \WHILE{$x_{t-1} \neq \text{EOS}$}
  \STATE $\text{$\cdot\cdot\cdot$Sequential alignment}$
    \STATE $\mathbf{L} \leftarrow LM({\bf x}_{0:t-1})$
    \STATE $\mathbf{P} \leftarrow \frac{e^{\mathbf{L}}}{\text{sum}(e^{\mathbf{L}})}$
    \STATE $E_{t-1} \leftarrow -\mathbf{P^{T}}\cdot \log(\mathbf{P})$

        \STATE $TP_{t-1} \gets Temp(E_{t-1}) $
        \STATE $\mathbf{L_{RF}} \leftarrow RF_t(\mathbf{L},TP_{t-1})$
        \STATE $\text{$\cdot\cdot\cdot$ Spatial alignment}$
        \STATE $\mathbf{L_{RF}} \leftarrow RF_s(\mathbf{L_{RF}},D_{freq}(v) ,M_{freq}(v) )$
        \STATE $\mathbf{P_{RF}} \leftarrow \frac{e^{\mathbf{L_{RF}}}}{\text{sum}(e^{\mathbf{L_{RF}}})}$
        \
        \STATE
$\text{$\cdot\cdot\cdot$Stego sampling}$
\STATE$x_{t}=w_{j} \leftarrow{f}{(B_i, \mathcal{K}, \mathbf{P_{RF}}  )}$
        \STATE $t \leftarrow t + 1$
    \ENDWHILE
    \RETURN   output stego text ${\bf x}$
\end{algorithmic}
\end{algorithm}

\subsection{Sequential Dimension}
\label{sequential dimension}
Our adjustments in the sequential dimension serve two purposes: first, to increase entropy while maintaining generation quality to enhance embedding capacity; and second, to counteract the model's sharpening effect \cite{lebrun2022evaluating}, a tendency where the model becomes overly confident in its predictions, which can reduce the diversity and creativity of the generated stego text.

We dynamically adjust the temperature at each time step $t$ based on the instantaneous entropy $E_t$ of the model:
\begin{equation}
   E_t = -\sum_{k \in \mathscr{V}} p(w_k | {\bf x}_{0:t-1}) \cdot \log[p(w_k | {\bf x}_{0:t-1})].
   \label{entro}
\end{equation}
By conditioning the temperature on the instantaneous entropy of each decoding step, FreStega avoids using a single fixed smoothing strength for all contexts.
\textcolor{black}{
This strategy enhances the model's ability to generate diverse stego text, better capturing the natural unpredictability and richness of human language.
The temperature $TP_t$ at time step $t$ is adjusted using the temperature function $Temp(\cdot)$:}
\begin{equation}{\color{black}
    TP_t=Temp(E_t)=1+\theta* \log_{2}(1+c*E_t)\label{temp},}
\end{equation}
\textcolor{black}{where $TP_t$ is the adjusted temperature at time step $t$, $c$ controls the adjustment strength, $\theta$ is a scaling factor that we typically fix at 0.01, and the base value 1 represents the default temperature typically used by Hugging Face LLMs, which can be determined by user settings.}
\textcolor{black}{
The logarithmic temperature has three useful properties. First, since $E_t\in[0,\log_2(|\mathscr{V}|)]$, $TP_t$ is bounded by $[1,1+\theta\log_2(1+c\log_2(|\mathscr{V}|))]$, keeping the adjustment mild. Second, the logarithm is increasing and concave: the temperature grows with the current entropy, but the marginal growth decreases, allowing entropy-aware exploration without excessive global over-smoothing. Low-entropy deterministic contexts therefore remain close to the base distribution. Third, dividing all logits by the same positive scalar preserves token ranking, so sequential adjustment changes distribution sharpness without replacing the model's preferred token with a semantically inconsistent one.}

The probability of each token is readjusted by the sequential probability distribution reform function $RF_t$.
\begin{equation}
\begin{split}
L_{RF_t}(x_t=w_k|{\bf x}_{0:t-1}) &=\frac{{\bf h}^{t} \cdot {\bf o}^{t}_{k}}{TP_t}. \\
&\label{divide t}
\end{split}
\end{equation}
$ L_{RF_t}(x_t=w_k|{\bf x}_{0:t-1}) $ is the logit of token \( w_k \) after sequential reformation $RF_t$ at time step $t$.
For numerical implementation, we apply a shared offset to these logits before spatial reweighting so that the resulting scores have a consistent sign. This shared shift preserves their order and is omitted from the notation for clarity.

\subsection{Spatial Dimension}
\label{spatial dimension}
Prior studies \cite{lebrun2022evaluating,TVD} have reported that LMs exhibit distorted preferences in next-token prediction during text generation. From a Bayesian perspective, the language distribution can be estimated as follows:

\begin{equation}
\begin{aligned}
& \hat{p}\left(x_{t}=w_{k}|{\bf x}_{0:t-1}\right)=\frac{p(\textbf{x}_{0:t-1}|x_t=w_k)p(x_t=w_k)}{p(\textbf{x}_{0:t-1})} \\
& =\begin{cases}\frac{p(\textbf{x}_{0:t-1}|x_t=w_k)}{p(\textbf{x}_{0:t-1})}\frac{n'_{k}}{\sum _{i}^{|\mathscr{V}|}{n'_i}} , \text{in\,generated\, text},
 \\\frac{p(\textbf{x}_{0:t-1}|x_t=w_k)}{p(\textbf{x}_{0:t-1})}\frac{n_{k}}{\sum _{i}^{|\mathscr{V}|}{n_i}} ,\text{ in\,training\, corpus}.
\end{cases}
\end{aligned}
\end{equation}
where $n_k$ and $n_i$ denote the frequencies of tokens $k$ and $i$ in the training corpus, while $n_k^{'}$ and $n_i^{'}$ denote their frequencies in generated text, respectively.

Tokens with the highest frequency in the training corpus appear even more frequently in the generated texts ($n_k^{'} > n_k$), while rare tokens become even rarer ($n_k^{'} < n_k$).

The intuitive idea behind the Spatial Adjustment function is to weaken the model's own bias towards certain tokens as much as possible and to closely approximate the distribution of tokens in the target cover corpus.

\textcolor{black}{For plug-and-play alignment with a target domain, we selected token frequency as our guiding signal because it represents the most intuitive and readily obtainable marginal distribution of text. From a statistical perspective, token frequency distributions in a corpus are widely regarded as empirical probability distributions of language, a view supported by both statistical natural language processing \cite{foundation} and information theory \cite{shannon1948mathematical}. Such distributions offer a clear and efficient proxy for capturing textual characteristics and can be seamlessly incorporated into the decoding process of a language model.}

To revise the biased LM distribution, we apply a spatial distribution reform function $RF_s$ to adjust the logits of a token $w_k$. The reformed logits $L_{RF_s}$ are shown in Equation \ref{Eq.7}:
\begin{equation}\label{Eq.7}
L_{RF_s}\left(w_{k}|{\bf x}_{0:t-1}\right) = \mathscr{F}(w_{k};\mathscr{D},\mathscr{M}) \cdot L_{RF_t}\left(\cdot\right),
\end{equation}

where $L_{RF_t}\left(\cdot\right)$ denotes the result of sequential reformation. $\mathscr{F}(w_{k};\mathscr{D},\mathscr{M})$ denotes the function computed from the unigram frequency of $w_{k}$ in the target corpus $\mathscr{D}$ and in the corpus generated by random sampling from the original model $\mathscr{M}$, as shown in Equation \ref{Eq.8}:
{
\begin{equation}\label{Eq.8}
    {\mathscr{F} (\textbf{w};\mathscr{D},\mathscr{M})=\log(2 +[\frac{f(\textbf{w}; \mathscr{D})}{f(\textbf{w}; \mathscr{M})}]^{\alpha}}),
\end{equation}
}

where $f(\textbf{w}; \mathscr{D})$ is the static unigram frequency of token $\textbf{w}$ in the target corpus $\mathscr{D}$, $f(\textbf{w}; \mathscr{M})$ is the static unigram frequency of token $\textbf{w}$ in the corpus generated by the original LM $\mathscr{M}$, and the
hyperparameter $\alpha$ controls the intensity of Spatial Adjustment.

The frequency adjustment function $\mathscr{F}(w_{k};\mathscr{D},\mathscr{M})$
shifts the resulting frequency distribution
closer to that of the target-domain cover corpus $\mathscr{D}$, diminishing the dominance of the frequency distribution inherent to the original model-generated text corpus $\mathscr{M}$.
	For $\alpha>0$, the function is positive and monotonically increasing with respect to the target-to-model frequency ratio $r$, so tokens that are more frequent in the target domain receive larger correction factors. Moreover, $\mathscr{F}(r)=O(\log r)$ and $\mathscr{F}(r)/r\to0$ as $r\to\infty$. Thus, extreme frequency ratios do not produce proportional increases in the correction factor. Unlike Sequential Adjustment, Spatial Adjustment may change token ranking because its purpose is to introduce a target-domain lexical preference; this sublinear growth moderates the influence of extreme ratios.
In this way, we encourage language models to output a distribution that more closely resembles target-domain cover texts. The reformed probability is
\begin{equation}\label{Eq.9}
p_{RF_s}\left(w_{k}|{\bf x}_{0:t-1}\right) = \frac{L_{RF_s}\left(w_{k}|{\bf x}_{0:t-1}\right)}{\sum_{w_{i}\in \mathscr{V}}L_{RF_s}\left(w_{i}|{\bf x}_{0:t-1}\right)}.
\end{equation}

\section{Experiments}

\subsection{Experiment Setup}

In our main experiment, we tested FreStega with four large language models, various steganography algorithms, and three datasets representing distinct human styles\footnote{\textcolor{black}{Our code and data are available at https://github.com/anonymousP0722/FreStega}}.

\subsubsection{Baselines}
FreStega requires no additional training modules such as classifiers. Existing attribute-controlled text generation methods are hard to apply directly to steganography and often need carefully designed training modules, making comparison with FreStega difficult.
We evaluated FreStega using the random sampling decoding method (viewed as the ideal steganography algorithm) and four state-of-the-art generative steganography algorithms: AC \cite{ziegler2019neural}, ADG \cite{zhang2021provably}, METEOR \cite{meteor2021}, and DISCOP \cite{ding2023discop}. These representative methods serve as baselines.

\subsubsection{Datasets}

We selected three datasets with distinct human characteristics that are challenging for models to mimic: IMDB \cite{IMDB}, XHS \cite{xhs}, and SHAKESPEARE\footnote{https://hf.com/datasets/ayaan04/shakespeare-text}. IMDB and SHAKESPEARE are English datasets, while XHS is in Chinese. IMDB and XHS represent popular online social platforms known for their informal, fragmented language, making them common scenarios for covert text communication. The SHAKESPEARE dataset is used to simulate situations where stego text needs to match the sender's usual style.
These datasets, due to their informality and low knowledge intensity, are typically underrepresented in LLM pre-training.
The corpus statistics are provided in the supplementary material.
In the main experiment, we used the entire target-domain corpus for alignment and explored the effect of varying the amount of target-domain data in subsequent analyses.

\subsubsection{Models}

We evaluate FreStega using four large language models (\textsc{\textcolor{black}{Mistral v0.3}} \cite{mistral}, \textsc{Qwen2} \cite{qwen}, \textsc{Llama2} \cite{llama2} and \textsc{Llama3} \cite{llama3}).
We use \textsc{\textcolor{black}{Mistral v0.3}} \cite{mistral}, \textsc{Qwen2} \cite{qwen}, \textsc{Llama2} \cite{llama2} and \textsc{Llama3} \cite{llama3} as the base models for the English datasets, with model scales of 7B, 7B, 7B, and 8B, respectively. Since the XHS dataset is entirely in Chinese, we use \textsc{Qwen2} 7B \cite{qwen} and \textcolor{black}{\textsc{ChatGLM3 6B}} \cite{du2022glm}, which were primarily trained on Chinese data.
Since large language models with around 7 billion parameters have already demonstrated strong instruction understanding capabilities, we conducted all our tests in a zero-shot scenario.
The prompts we used are detailed in the supplementary material. These prompts are loaded with the function {\texttt{apply\_chat\_template}} and then used as inputs to the LLMs.
For each test, we let the models generate 10,000 samples with top-$p=1.0$ and base temperature $1.0$.
All our experiments were conducted on 2 $\times$ NVIDIA A5000 GPUs (32GB RAM) and 24 $\times$ Intel Xeon w5-3423 CPUs.
\subsection{Metrics}
We primarily evaluate the quality of stego text based on three criteria: \textbf{linguistic quality} (including \textit{perplexity} and \textit{diversity}), \textbf{capacity} (including \textit{entropy} and \textit{embedding rate}), and \textbf{domain statistical imperceptibility} (including \textit{MAUVE} \cite{mauve} and \textit{steganalysis F1 scores}).

\subsubsection{Linguistic Quality}
\textcolor{black}{Although perplexity (PPL) does not fully capture the quality of generated text, it can partially reflect fluency.} A lower PPL indicates more fluent generated text. We use Qwen2-7B as the evaluator to calculate PPL in the experiments.
    $PPL = \exp\left(-\frac{1}{N}\sum_{i=1}^{N}\log p(x_i|{\bf x}_{1 : i - 1 })\right)$.

For text diversity, we use the $distinct_n$ metric, which computes the ratio of unique $n$-grams to all $n$-grams in the text.
$
    distinct_n = \frac{count(unique\, n-grams)}{count(n-grams)}.
$
\textcolor{black}{In the main experiments, we use} $distinct_3$.

\subsubsection{Capacity}

Entropy measures the richness of information in a sentence, with higher entropy indicating greater information uncertainty. It also represents the upper limit of generative linguistic steganographic capacity \cite{liao2024co}.
{
\begin{equation}
entropy = \sum_{i=1}^{N} \sum_{j=1}^{|\mathscr{V}|}p_i(w_j|{\bf x}_{1 : i - 1 }) \log p_i(w_j|{\bf x}_{1 : i - 1 }).
\end{equation}
}
\textcolor{black}{In our experiments, we use the average entropy per token as the measurement, defined as:}
$\textcolor{black}{
    \frac{entropy}{token} = \frac{1}{N}  \sum_{i=1}^{N} \sum_{j=1}^{|\mathscr{V}|}p_i(w_j|{\bf x}_{1 : i - 1 }) \log p_i(w_j|{\bf x}_{1 : i - 1 }).}
$

The Embedding Rate (ER) quantifies the payload of stego text by measuring the average number of embedded bits per token. A higher ER indicates a greater relative embedding capacity. The ER is calculated using the formula:
$
ER = \frac{L}{N_t},
$
where \(L\) is the length of the embedded message, and \(N_t\) is the number of tokens in the stego text.

\subsubsection{Domain Statistical Imperceptibility}

The main goal of steganography is to conceal a secret message so that it appears as an innocent carrier. \textcolor{black}{We mainly measure the imperceptibility of stego text from two perspectives: the degree of similarity to the distribution of real cover text in the target domain and the ability to resist detection by steganalysis. }

We use MAUVE \cite{mauve} to measure how closely the distribution of generated stego text matches that of target-domain human text (estimated by Monte Carlo methods). MAUVE ranges from $[0,1]$, and values closer to 1 indicate more similar distributions.

We also utilized three classic steganalysis models to assess the imperceptibility of stego texts: TS-CSW \cite{TS-CSW}, TS-RNN \cite{TS-RNN}, and R-BiLSTM-C \cite{niu2019hybrid}, based on pre-trained BERT \cite{BERT}.
We trained these detectors with 1,000 samples of generated stego texts and the target-domain human corpus in each test.
The datasets for training, validation, and testing were split in a ratio of 3:1:1. The learning rate was set to 1e-4, and training was conducted for three epochs.
After repeating this process three times, we evaluated the average F1 score of the test set as the steganalysis F1 score.

\begin{table*}
\centering
     \caption{\textcolor{black}{IMDB Main Result. "W/o" indicates models without FreStega, while "W/" indicates models with FreStega.}}
     \label{tab:Movie main result}
     \resizebox{\textwidth}{!}{
    \begin{tabular}{clccccccccc}
     \toprule[1.5pt]\hline
         \multirow{2}{*}{\textbf{Model}}&  \multirow{2}{*}{\textbf{Algorithm}}& \multicolumn{2}{c}{\textbf{Linguistic Quality}} & \multicolumn{2}{c}{\textbf{Capacity}}& \multicolumn{5}{c}{\textbf{Domain Statistical Imperceptibility}}\\
         \cmidrule(lr){3-4}
         \cmidrule(lr){5-6}
        \cmidrule(lr){7-11}

        ~ & ~ & \textbf{Div.}$\uparrow$ & \textbf{PPL}$\downarrow$ & \textbf{Entro./token}$\uparrow$ & \textbf{ER} $\uparrow$ & \textbf{MAU.}$\uparrow$ & \textbf{CSW-F1}$\downarrow$ & \textbf{RNN-F1}$\downarrow$ & \textbf{R-BiLSTM-C-F1}$\downarrow$ &  \\ \hline
        \multirow{10}{*}{\textsc{Llama2 \cite{llama2}}} & RS w/o & 0.12 & 3.03 & 0.80 & / & 0.05 & 83.59\%$\pm$0.85\% & 84.81\%$\pm$2.71\% & 84.16\%$\pm$7.16\% &  \\
        ~ & AC \cite{ziegler2019neural} w/o & 0.13 & 3.09 & 0.81 & 0.55 & 1.05 & 83.68\%$\pm$1.43\% & 81.89\%$\pm$2.43\% & 88.48\%$\pm$1.96\% &  \\
        ~ & ADG \cite{zhang2021provably} w/o & 0.12 & 3.02 & 0.79 & 0.10 & 0.87 & 85.72\%$\pm$2.25\% & 80.95\%$\pm$3.10\% & 86.07\%$\pm$2.87\% &  \\
        ~ & METEOR \cite{meteor2021} w/o & 0.11 & 2.97 & 0.79 & 0.29 & 0.84 & 85.52\%$\pm$1.42\% & 79.66\%$\pm$0.73\% & 81.71\%$\pm$5.49\% &  \\
        ~ & DISCOP \cite{ding2023discop} w/o & 0.12 & 3.00 & 0.80 & 0.25 & 3.41 & 83.61\%$\pm$3.21\% & 81.44\%$\pm$4.00\% & 84.31\%$\pm$2.84\% &  \\ \cmidrule{2-10}
        ~ & RS w/ & \textbf{0.30} & 3.07 & \textbf{0.81} & / & \textbf{4.37} & \textbf{69.67\%$\pm$1.98\%} & \textbf{64.81\%$\pm$2.71\%} & \textbf{70.27\%$\pm$3.12\%} &  \\
        ~ & AC \cite{ziegler2019neural} w/ & \textbf{0.34} & 3.20 & \textbf{0.82} & \textbf{0.56} & \textbf{3.66} & \textbf{74.23\%$\pm$1.21\%} & \textbf{67.23\%$\pm$4.00\%} & \textbf{71.23\%$\pm$5.65\%} &  \\
        ~ & ADG \cite{zhang2021provably} w/ & \textbf{0.30} & \textbf{2.85} & \textbf{0.79} & \textbf{0.12} & \textbf{5.67} & \textbf{72.65\%$\pm$3.64\%} & \textbf{65.45\%$\pm$1.87\%} & \textbf{67.51\%$\pm$3.86\%} &  \\
        ~ & METEOR \cite{meteor2021} w/ & \textbf{0.31} & 3.05 & \textbf{0.81} & \textbf{0.32} & \textbf{7.65} & \textbf{71.34\%$\pm$4.21\%} & \textbf{64.16\%$\pm$3.61\%} & \textbf{68.91\%$\pm$2.34\%} &  \\
        ~ & DISCOP\cite{ding2023discop} w/ & \textbf{0.30} & \textbf{3.00} & \textbf{0.82} & \textbf{0.28} & \textbf{6.60} & \textbf{65.23\%$\pm$1.85\%} & \textbf{68.68\%$\pm$2.95\%} & \textbf{70.39\%$\pm$2.21\%} &  \\ \cmidrule{1-10}
        \multirow{10}{*}{\textsc{Llama3 \cite{llama3}}} & RS w/o & 0.13 & 3.80 & 0.80 & / & 0.65 & 88.10\%$\pm$3.35\% & 81.80\%$\pm$4.50\% & 90.80\%$\pm$4.03\% &  \\
        ~ & AC \cite{ziegler2019neural} w/o & 0.14 & 3.96 & 0.82 & 0.64 & 0.48 & 88.62\%$\pm$2.62\% & 84.92\%$\pm$1.70\% & 87.07\%$\pm$3.90\% &  \\
        ~ & ADG \cite{zhang2021provably} w/o & 0.12 & 3.76 & 0.80 & 0.19 & 0.60 & 89.29\%$\pm$3.33\% & 83.80\%$\pm$4.50\% & 87.88\%$\pm$1.96\% &  \\
        ~ & METEOR \cite{meteor2021} w/o & 0.13 & 3.79 & 0.81 & 0.36 & 0.73 & 91.31\%$\pm$1.53\% & 81.41\%$\pm$2.58\% & 90.40\%$\pm$2.25\% &  \\
        ~ & DISCOP \cite{ding2023discop} w/o & 0.13 & 3.88 & 0.80 & 0.31 & 0.66 & 88.07\%$\pm$2.64\% & 82.33\%$\pm$3.05\% & 87.80\%$\pm$6.54\% &  \\ \cmidrule{2-10}
        ~ & RS w/ & \textbf{0.34} & \textbf{3.78} & \textbf{0.81} & / & \textbf{5.00} & \textbf{69.53\%$\pm$3.42\%} & \textbf{67.80\%$\pm$4.50\%} & \textbf{78.21\%$\pm$2.45\%} &  \\
        ~ & AC \cite{ziegler2019neural} w/ & \textbf{0.36} & 4.03 & \textbf{0.84} & \textbf{0.69} & \textbf{6.54} & \textbf{68.25\%$\pm$3.12\%} & \textbf{69.23\%$\pm$3.56\%} & \textbf{79.22\%$\pm$3.91\%} &  \\
        ~ & ADG \cite{zhang2021provably} w/ & \textbf{0.39} & 3.78 & \textbf{0.83} & \textbf{0.22} & \textbf{5.00} & \textbf{72.13\%$\pm$1.94\%} & \textbf{69.35\%$\pm$3.87\%} & \textbf{77.33\%$\pm$3.62\%} &  \\
        ~ & METEOR \cite{meteor2021} w/ & \textbf{0.41} & 4.13 & \textbf{0.84} & \textbf{0.41} & \textbf{8.25} & \textbf{66.49\%$\pm$2.13\%} & \textbf{68.31\%$\pm$1.43\%} & \textbf{70.92\%$\pm$1.49\%} &  \\
        ~ & DISCOP \cite{ding2023discop} w/ & \textbf{0.39} & \textbf{3.88} & \textbf{0.80} & \textbf{0.34} & \textbf{8.35} & \textbf{67.14\%$\pm$3.94\%} & \textbf{68.53\%$\pm$1.82\%} & \textbf{71.23\%$\pm$1.72\%} &  \\ \cmidrule{1-10}
        \multirow{10}{*}{\textsc{\textcolor{black}{Mistral v0.3} \cite{mistral}}} & RS w/o & 0.19 & 4.03 & 0.90 & / & 0.59 & 86.43\%$\pm$5.32\% & 80.07\%$\pm$5.13\% & 89.56\%$\pm$3.51\% &  \\
        ~ & AC \cite{ziegler2019neural} w/o & 0.18 & 3.91 & 0.88 & 0.95 & 0.55 & 87.03\%$\pm$2.02\% & 81.06\%$\pm$1.12\% & 90.52\%$\pm$3.26\% &  \\
        ~ & ADG \cite{zhang2021provably} w/o & 0.22 & 4.16 & 0.93 & 0.44 & 1.12 & 86.95\%$\pm$1.78\% & 88.91\%$\pm$3.27\% & 92.43\%$\pm$3.15\% &  \\
        ~ & METEOR \cite{meteor2021} w/o & 0.19 & 4.01 & 0.90 & 0.60 & 0.54 & 87.15\%$\pm$2.14\% & 83.88\%$\pm$3.99\% & 91.07\%$\pm$3.68\% &  \\
        ~ & DISCOP \cite{ding2023discop} w/o & 0.19 & 4.03 & 0.91 & 0.49 & 0.56 & 86.95\%$\pm$2.57\% & 85.77\%$\pm$1.45\% & 85.42\%$\pm$4.31\% &  \\ \cmidrule{2-10}
        ~ & RS w/ & \textbf{0.44} & \textbf{3.96} & 0.89 & / & \textbf{3.00} & \textbf{75.82\%$\pm$1.32\%} & \textbf{66.66\%$\pm$0.73\%} & \textbf{76.23\%$\pm$3.59\%} &  \\
        ~ & AC \cite{ziegler2019neural} w/ & \textbf{0.41} & \textbf{3.75} & \textbf{0.88} & \textbf{0.97} & \textbf{5.26} & \textbf{72.03\%$\pm$1.66\%} & \textbf{68.46\%$\pm$5.42\%} & \textbf{76.90\%$\pm$1.08\%} &  \\
        ~ & ADG \cite{zhang2021provably} w/ & \textbf{0.44} & 4.34 & \textbf{0.94} & \textbf{0.48} & \textbf{4.17} & \textbf{74.39\%$\pm$3.45\%} & \textbf{70.06\%$\pm$1.12\%} & \textbf{79.55\%$\pm$1.13\%} &  \\
        ~ & METEOR \cite{meteor2021} w/ & \textbf{0.48} & 4.29 & \textbf{0.92} & \textbf{0.70} & \textbf{4.22} & \textbf{74.13\%$\pm$2.89\%} & \textbf{68.95\%$\pm$1.28\%} & \textbf{74.46\%$\pm$3.59\%} &  \\
        ~ & DISCOP \cite{ding2023discop} w/ & \textbf{0.44} & 4.17 & \textbf{0.91} & \textbf{0.50} & \textbf{2.91} & \textbf{75.46\%$\pm$1.38\%} & \textbf{66.67\%$\pm$1.98\%} & \textbf{70.89\%$\pm$2.26\%} &  \\ \cmidrule{1-10}
        \multirow{10}{*}{\textsc{Qwen2 \cite{qwen}}} & RS w/o & 0.12 & 2.35 & 0.73 & / & 0.53 & 85.88\%$\pm$1.13\% & 81.75\%$\pm$3.23\% & 84.61\%$\pm$3.68\% &  \\
        ~ & AC \cite{ziegler2019neural} w/o & 0.14 & 2.53 & 0.74 & 0.65 & 0.66 & 85.36\%$\pm$3.39\% & 84.74\%$\pm$2.45\% & 80.92\%$\pm$4.26\% &  \\
        ~ & ADG \cite{zhang2021provably} w/o & 0.13 & 2.39 & 0.74 & 0.20 & 0.69 & 85.90\%$\pm$2.94\% & 88.07\%$\pm$2.12\% & 87.39\%$\pm$1.10\% &  \\
        ~ & METEOR \cite{meteor2021} w/o & 0.13 & 2.36 & 0.73 & 0.36 & 0.57 & 85.05\%$\pm$4.64\% & 80.61\%$\pm$2.85\% & 83.91\%$\pm$5.77\% &  \\
        ~ & DISCOP \cite{ding2023discop} w/o & 0.12 & 2.34 & 0.73 & 0.31 & 0.53 & 84.89\%$\pm$4.32\% & 83.01\%$\pm$3.01\% & 83.08\%$\pm$5.49\% &  \\ \cmidrule{2-10}
        ~ & RS w/ & \textbf{0.38} & 2.45 & \textbf{0.75} & / & \textbf{2.79} & \textbf{71.38\%$\pm$1.29\%} & \textbf{70.78\%$\pm$2.12\%} & \textbf{63.30\%$\pm$3.49\%} &  \\
        ~ & AC \cite{ziegler2019neural} w/ & \textbf{0.37} & \textbf{2.45} & \textbf{0.75} & \textbf{0.70} & \textbf{1.90} & \textbf{72.82\%$\pm$1.07\%} & \textbf{72.61\%$\pm$0.88\%} & \textbf{67.67\%$\pm$2.82\%} &  \\
        ~ & ADG \cite{zhang2021provably} w/ & \textbf{0.37} & \textbf{2.39} & \textbf{0.75} & \textbf{0.23} & \textbf{2.60} & \textbf{71.38\%$\pm$1.09\%} & \textbf{71.27\%$\pm$1.56\%} & \textbf{72.13\%$\pm$3.98\%} &  \\
        ~ & METEOR \cite{meteor2021} w/ & \textbf{0.38} & 2.43 & \textbf{0.74} & \textbf{0.41} & \textbf{2.38} & \textbf{71.12\%$\pm$3.38\%} & \textbf{60.56\%$\pm$2.19\%} & \textbf{65.92\%$\pm$1.18\%} &  \\
        ~ & DISCOP \cite{ding2023discop} w/ & \textbf{0.33} & 2.42 & \textbf{0.76} & \textbf{0.32} & \textbf{1.76} & \textbf{72.13\%$\pm$1.12\%} & \textbf{70.27\%$\pm$1.28\%} & \textbf{69.28\%$\pm$3.19\%} &  \\ \hline\bottomrule[1.5pt]
    \end{tabular}}
\end{table*}

\begin{table*}
     \caption{\textcolor{black}{XHS Main Result. "W/o" indicates models without FreStega, while "W/" indicates models with FreStega.}}
   \resizebox{\textwidth}{!}{
    \begin{tabular}{clcccccccc}
     \toprule[1.5pt]\hline
         \multirow{2}{*}{\textbf{Model}}&  \multirow{2}{*}{\textbf{Algorithm}}& \multicolumn{2}{c}{\textbf{Linguistic Quality}} & \multicolumn{2}{c}{\textbf{Capacity}}& \multicolumn{4}{c}{\textbf{Domain Statistical Imperceptibility}}\\
         \cmidrule(lr){3-4}
         \cmidrule(lr){5-6}
        \cmidrule(lr){7-10}

         & & \textbf{Div.}$\uparrow$& \textbf{PPL}$\downarrow$  &\textbf{Entro./token}$\uparrow$ & \textbf{ER} $\uparrow$ &\textbf{MAU.}$\uparrow$&\textbf{CSW-F1}$\downarrow$ &\textbf{RNN-F1}$\downarrow$ &\textbf{R-BiLSTM-C-F1}$\downarrow$\\\hline
 \multirow{10}{*}{\textsc{Qwen2\cite{qwen}}}& RS w/o &0.74& 15.83&2.22& /& 73.18&89.08\%$\pm$2.64\% & 85.18\%$\pm$0.98\%&79.73\%$\pm$2.64\%\\
 & AC\cite{ziegler2019neural} w/o & 0.76& 11.08&2.08& 2.83& 70.14& 90.41\%$\pm$2.62\%& 79.94\%$\pm$3.26\%&80.24\%$\pm$1.48\%\\
 & ADG\cite{zhang2021provably} w/o & 0.76& 21.35&2.06& 1.93& 63.54& 94.94\%$\pm$1.16\%& 74.19\%$\pm$3.60\%&83.34\%$\pm$2.10\%\\
 & METEOR\cite{meteor2021} w/o & 0.73& 16.19&2.18& 1.48& 73.69& 89.15\%$\pm$1.29\%& 80.26\%$\pm$1.95\%&85.82\%$\pm$2.93\%\\
 & DISCOP\cite{ding2023discop} w/o & 0.78& 11.02&2.13& 1.13 & 74.63& 87.81\%$\pm$2.15\%& 86.53\%$\pm$4.56\%&87.22\%$\pm$3.77\%\\
 \cmidrule{2-10}
& RS w/ &\textbf{0.80}& \textbf{15.77}&\textbf{2.30}& /& \textbf{91.05}&\textbf{57.79\%$\pm$1.81\%}&\textbf{47.05\%$\pm$2.63\%}&\textbf{66.43\%$\pm$2.94\%}\\
 & AC\cite{ziegler2019neural} w/ &\textbf{0.78}& 11.11&\textbf{2.17}& \textbf{2.99}& \textbf{93.97}& \textbf{52.85\%$\pm$3.24\%}&\textbf{50.05\%$\pm$2.74\%}&\textbf{71.28\%$\pm$1.38\%}\\
 &ADG\cite{zhang2021provably} w/ &\textbf{0.79}& \textbf{19.17}&\textbf{2.31}& \textbf{2.43}& \textbf{91.72}& \textbf{60.79\%$\pm$1.41\%}& \textbf{52.77\%$\pm$1.65\%}&\textbf{73.31\%$\pm$3.36\%}\\
 & METEOR\cite{meteor2021} w/ &\textbf{0.78}& \textbf{12.35}&\textbf{2.18}& \textbf{2.15}& \textbf{93.20}& \textbf{53.61\%$\pm$2.71\%}& \textbf{47.05\%$\pm$2.74\%}&\textbf{69.21\%$\pm$1.43\%}\\
 &DISCOP\cite{ding2023discop} w/ &\textbf{0.81}& 13.61&\textbf{2.17}& \textbf{1.16}& \textbf{92.47}&\textbf{55.56\%$\pm$2.90\%}&\textbf{45.77\%$\pm$2.55\%}&\textbf{65.29\%$\pm$1.16\%}\\\hline
 \multirow{10}{*}{\textsc{Chatglm3\cite{du2022glm}}}& RS w/o &0.75& 319.62 &4.55& /& 26.44& 62.19\%$\pm$4.50\%& 68.23\%$\pm$4.99\%&76.01\%$\pm$1.17\%\\
 & AC\cite{ziegler2019neural} w/o & 0.75& 308.82  &4.56& 4.23& 27.87& 68.51\%$\pm$3.37\%& 77.55\%$\pm$3.04\% &74.44\%$\pm$3.43\%\\
 & ADG\cite{zhang2021provably} w/o & 0.77& 394.76&4.61&4.36 & 54.43& 79.95\%$\pm$0.79\%&72.73\%$\pm$1.42\% &78.43\%$\pm$1.30\%\\
 & METEOR\cite{meteor2021} w/o & 0.70& 284.99  &4.08& 2.27& 31.35& 69.02\%$\pm$3.81\%& 73.82\%$\pm$4.56\%&64.52\%$\pm$3.11\%\\
 & DISCOP\cite{ding2023discop} w/o & 0.75& 311.86  &4.52& 1.29& 27.78& 65.80\%$\pm$4.59\% & 74.62\%$\pm$3.77\%&75.44\%$\pm$3.43\%\\
  \cmidrule{2-10}
 & RS w/ &\textbf{0.78}& 366.74 &\textbf{4.67}&  /& \textbf{80.17}&\textbf{44.44\%$\pm$1.43\%}&\textbf{43.44\%$\pm$3.21\%}&\textbf{34.23\%$\pm$2.92\%}\\
 & AC\cite{ziegler2019neural} w/ &\textbf{0.77}&366.31  &\textbf{4.66}& \textbf{4.53}& \textbf{93.21}& \textbf{47.21\%$\pm$1.51\%}& \textbf{45.77\%$\pm$2.99\%}&\textbf{38.59\%$\pm$2.31\%}\\
 & ADG\cite{zhang2021provably} w/ &\textbf{0.79}&\textbf{394.53}  &\textbf{4.88}&\textbf{4.82} &\textbf{71.11}&\textbf{52.33\%$\pm$1.02\%} & \textbf{52.26\%$\pm$1.42\%}&\textbf{54.78\%$\pm$1.29\%}\\
 & METEOR\cite{meteor2021} w/ &\textbf{0.78}&365.81 &\textbf{4.66}&\textbf{3.44} & \textbf{90.61}&\textbf{44.16\%$\pm$3.26\%}&\textbf{43.85\%$\pm$1.01\%}&\textbf{41.26\%$\pm$2.35\%}\\
 \rowcolor{gray} & DISCOP\cite{ding2023discop} w/ &\textbf{0.77}& 363.70 &\textbf{4.62}& \textbf{1.39}& \textbf{90.02}& \textbf{45.77\%$\pm$2.55\%}& \textbf{44.15\%$\pm$2.33\%}&\textbf{43.58\%$\pm$2.39\%}\\
 \hline\bottomrule[1.5pt]
    \end{tabular}
   }
    \label{tab:XHS main result}
\end{table*}

\begin{table*}
     \centering
          \caption{\textcolor{black}{SHAKESPEARE Main Result. "W/o" indicates models without FreStega, while "W/" indicates models with FreStega.}}
     \resizebox{\textwidth}{!}{
    \begin{tabular}{clcccccccc}
     \toprule[1.5pt]\hline
         \multirow{2}{*}{\textbf{Model}}&  \multirow{2}{*}{\textbf{Algorithm}}& \multicolumn{2}{c}{\textbf{Linguistic Quality}} & \multicolumn{2}{c}{\textbf{Capacity}}& \multicolumn{4}{c}{\textbf{Domain Statistical Imperceptibility}}\\
         \cmidrule(lr){3-4}
         \cmidrule(lr){5-6}
        \cmidrule(lr){7-10}

         & & \textbf{Div.}$\uparrow$& \textbf{PPL}$\downarrow$ &\textbf{Entro./token}$\uparrow$ & \textbf{ER} $\uparrow$ &\textbf{MAU.}$\uparrow$&\textbf{CSW-F1}$\downarrow$ &\textbf{RNN-F1}$\downarrow$ &\textbf{R-BiLSTM-C-F1}$\downarrow$\\\hline

 \multirow{10}{*}{\textsc{Llama2\cite{llama2}}}& RS w/o &0.31& 4.17&0.83& /& 9.12& 93.39\%$\pm$1.02\%& 94.81\%$\pm$0.93\%&96.52\%$\pm$0.50\%\\
 & AC\cite{ziegler2019neural} w/o & 0.30& 4.13 &0.83& 0.32& 9.09& 94.29\%$\pm$1.05\%& 95.21\%$\pm$0.92\%&96.31\%$\pm$0.32\%\\
 & ADG\cite{zhang2021provably} w/o & 0.29& 4.09 &0.80& 0.27&8.76& 94.45\%$\pm$0.25\%& 96.93\%$\pm$0.77\%&96.53\%$\pm$0.63\%\\
 & METEOR\cite{meteor2021} w/o & 0.29& 4.10&0.80& 0.28& 9.32& 93.25\%$\pm$1.20\%& 96.22\%$\pm$0.87\%&96.00\%$\pm$0.73\%\\
 & DISCOP\cite{ding2023discop} w/o & 0.29& 4.08&0.81& 0.14& 8.08& 95.25\%$\pm$1.40\%& 97.16\%$\pm$0.07\%&96.99\%$\pm$0.48\%\\\cmidrule{2-10}
         &  RS w/ &\textbf{0.45}& 4.38&\textbf{0.83}&  /& \textbf{19.52}& \textbf{69.96\%$\pm$2.26\%}& \textbf{70.13\%$\pm$2.27\%}&\textbf{70.81\%$\pm$7.49\%}\\
         &  AC\cite{ziegler2019neural} w/ &\textbf{0.47}& 4.58&\textbf{0.85}&  \textbf{0.38}& \textbf{20.15}& \textbf{69.56\%$\pm$1.12\%}& \textbf{69.98\%$\pm$1.37\%}&\textbf{72.17\%$\pm$4.56\%}\\
         &  ADG\cite{zhang2021provably} w/ &\textbf{0.46}&  4.42&\textbf{0.83}&  \textbf{0.35}& \textbf{22.47}& \textbf{67.01\%$\pm$1.82\%}& \textbf{66.62\%$\pm$2.90\%}&\textbf{68.45\%$\pm$2.59\%}\\
         &  METEOR\cite{meteor2021} w/ &\textbf{0.44}& 4.32&\textbf{0.84}& \textbf{0.31}&\textbf{21.99}& \textbf{68.96\%$\pm$1.12\%}& \textbf{68.80\%$\pm$1.83\%}&\textbf{70.89\%$\pm$5.60\%}\\
         &  DISCOP\cite{ding2023discop} w/ &\textbf{0.43}&  4.29&\textbf{0.83}&  \textbf{0.17}& \textbf{24.64}& \textbf{67.52\%$\pm$1.81\%}& \textbf{66.62\%$\pm$2.90\%}&\textbf{70.46\%$\pm$3.12\%}\\\cmidrule{1-10}
  \multirow{10}{*}{\textsc{Llama3\cite{llama3}}}& RS w/o &0.37& 3.86&0.70& /& 0.54& 83.70\%$\pm$0.48\%& 87.64\%$\pm$1.34\%&86.00\%$\pm$0.29\%\\
 & AC\cite{ziegler2019neural} w/o & 0.35& 3.83&0.70& 0.36& 0.54& 84.92\%$\pm$0.32\%& 89.14\%$\pm$2.46\%&87.39\%$\pm$1.69\%\\
 & ADG\cite{zhang2021provably} w/o & 0.36& 3.84&0.71& 0.33& 0.49& 81.02\%$\pm$4.88\%& 86.35\%$\pm$1.51\%&85.47\%$\pm$1.95\%\\
 & METEOR\cite{meteor2021} w/o & 0.37& 3.85&0.71& 0.23& 0.56& 82.27\%$\pm$2.12\%& 87.23\%$\pm$0.49\%&84.78\%$\pm$1.40\%\\
 & DISCOP\cite{ding2023discop} w/o & 0.35& 3.83&0.71&0.17& 0.56& 83.13\%$\pm$0.78\%& 86.95\%$\pm$0.59\%&85.21\%$\pm$0.99\%\\\cmidrule{2-10}
 & RS w/ &\textbf{0.47}& \textbf{3.82}&\textbf{0.70}& /& \textbf{2.08}& \textbf{60.91\%$\pm$2.25\%}& \textbf{58.62\%$\pm$3.92\%}&\textbf{62.03\%$\pm$3.80\%}\\
 & AC\cite{ziegler2019neural} w/ &\textbf{0.48}& 3.93&\textbf{0.71}& \textbf{0.39}& \textbf{2.03}& \textbf{61.27\%$\pm$1.32\%}& \textbf{59.74\%$\pm$1.69\%}&\textbf{65.09\%$\pm$3.35\%}\\
 & ADG\cite{zhang2021provably} w/ &\textbf{0.49}&3.94&\textbf{0.73}& \textbf{0.37}& \textbf{3.14}& \textbf{61.29\%$\pm$1.07\%}& \textbf{57.41\%$\pm$1.31\%}&\textbf{64.33\%$\pm$2.08\%}\\
 & METEOR\cite{meteor2021} w/ &\textbf{0.49}&3.96&\textbf{0.71}& \textbf{0.26}& \textbf{1.95}& \textbf{59.56\%$\pm$4.52\%}& \textbf{54.72\%$\pm$1.67\%}&\textbf{65.48\%$\pm$3.70\%}\\
 & DISCOP\cite{ding2023discop} w/ &\textbf{0.50}& 3.99&\textbf{0.71}& \textbf{0.18}& \textbf{2.60}& \textbf{58.87\%$\pm$3.08\%}& \textbf{57.41\%$\pm$4.31\%}&\textbf{65.71\%$\pm$2.58\%}\\\hline
 \multirow{10}{*}{\textsc{\textcolor{black}{Mistral v0.3}\cite{mistral}}}& RS w/o &0.34& 4.02&0.72& /& 1.04& 84.86\%$\pm$0.82\%& 88.37\%$\pm$0.71\%&85.60\%$\pm$1.18\%\\
 & AC\cite{ziegler2019neural} w/o & 0.37& 4.08 &0.72& 0.23& 1.18& 86.31\%$\pm$1.95\%& 90.36\%$\pm$2.47\%&86.73\%$\pm$2.04\%\\
 & ADG\cite{zhang2021provably} w/o & 0.37& 4.07&0.71& 0.22& 1.14& 83.17\%$\pm$0.80\%& 88.55\%$\pm$0.57\%&86.95\%$\pm$1.45\%\\
 & METEOR\cite{meteor2021} w/o & 0.36& 4.06&0.72&0.16& 1.16& 83.55\%$\pm$1.45\%& 88.04\%$\pm$0.86\%&86.86\%$\pm$0.34\%\\
 & DISCOP\cite{ding2023discop} w/o & 0.35&4.05&0.72& 0.11& 1.29& 83.59\%$\pm$1.29\%& 87.58\%$\pm$0.64\%&86.52\%$\pm$0.30\%\\\cmidrule{2-10}
 & RS w/ &\textbf{0.49}&\textbf{4.02}&\textbf{0.73}& /& \textbf{35.75}& \textbf{59.67\%$\pm$2.78\%}& \textbf{65.57\%$\pm$1.87\%}&\textbf{66.21\%$\pm$3.60\%}\\
 & AC\cite{ziegler2019neural} w/ &\textbf{0.52}& 4.11&\textbf{0.73}& \textbf{0.26}& \textbf{29.73}& \textbf{59.32\%$\pm$3.02\%}& \textbf{64.82\%$\pm$2.30\%}&\textbf{66.27\%$\pm$3.29\%}\\
 & ADG\cite{zhang2021provably} w/ &\textbf{0.49}&\textbf{4.00}&\textbf{0.72}& \textbf{0.24}& \textbf{12.10}& \textbf{59.22\%$\pm$3.58\%}& \textbf{63.27\%$\pm$4.86\%}&\textbf{67.24\%$\pm$2.43\%}\\
 & METEOR\cite{meteor2021} w/ &\textbf{0.48}&4.18&\textbf{0.74}& \textbf{0.17}& \textbf{23.45}& \textbf{57.13\%$\pm$1.75\%}& \textbf{64.84\%$\pm$2.30\%}&\textbf{66.30\%$\pm$4.14\%}\\
 & DISCOP\cite{ding2023discop} w/ &\textbf{0.52}&4.25&\textbf{0.73}& \textbf{0.12}& \textbf{37.56}& \textbf{58.32\%$\pm$1.86\%}& \textbf{63.96\%$\pm$2.27\%}&\textbf{66.32\%$\pm$3.16\%}\\\hline
 \multirow{10}{*}{\textsc{Qwen2\cite{qwen}}}& RS w/o &0.39& 3.06&0.72& /& 4.98& 82.10\%$\pm$1.15\%& 81.17\%$\pm$0.36\%&81.06\%$\pm$1.76\%\\
 & AC\cite{ziegler2019neural} w/o & 0.40& 3.05&0.72& 0.46& 4.00& 83.97\%$\pm$2.71\%& 82.88\%$\pm$2.06\%&82.15\%$\pm$1.56\%\\
 & ADG\cite{zhang2021provably} w/o & 0.41& 3.08&0.73& 0.43& 4.18& 81.93\%$\pm$0.74\%& 81.38\%$\pm$1.02\%&80.36\%$\pm$0.92\%\\
 & METEOR\cite{meteor2021} w/o & 0.40& 3.07&0.73&0.31& 4.27& 80.68\%$\pm$1.45\%& 68.61\%$\pm$2.85\%&80.26\%$\pm$1.28\%\\
 & DISCOP\cite{ding2023discop} w/o & 0.40& 3.07&0.73& 0.20& 4.96& 80.84\%$\pm$1.00\%& 81.17\%$\pm$2.01\%&81.54\%$\pm$1.20\%\\\cmidrule{2-10}
 & RS w/ &\textbf{0.51}& \textbf{3.04}&\textbf{0.74}& /& \textbf{9.04}& \textbf{58.46\%$\pm$2.87\%}& \textbf{52.63\%$\pm$1.68\%}&\textbf{62.36\%$\pm$3.83\%}\\
 & AC\cite{ziegler2019neural} w/ &\textbf{0.53}& 3.19&\textbf{0.76}& \textbf{0.49}& \textbf{8.80}& \textbf{57.18\%$\pm$2.58\%}& \textbf{52.21\%$\pm$4.68\%}&\textbf{62.34\%$\pm$3.92\%}\\
 & ADG\cite{zhang2021provably} w/ &\textbf{0.51}& \textbf{3.03}&\textbf{0.75}& \textbf{0.44}& \textbf{14.21}& \textbf{54.92\%$\pm$2.21\%}& \textbf{54.41\%$\pm$4.67\%}&\textbf{64.26\%$\pm$3.73\%}\\
 & METEOR\cite{meteor2021} w/ &\textbf{0.53}&3.20&\textbf{0.74}& \textbf{0.33}& \textbf{14.29}& \textbf{55.58\%$\pm$2.86\%}& \textbf{53.98\%$\pm$3.14\%}&\textbf{61.38\%$\pm$3.65\%}\\
 & DISCOP\cite{ding2023discop} w/ &\textbf{0.52}&3.18&\textbf{0.75}& \textbf{0.21}& \textbf{8.19}& \textbf{58.60\%$\pm$2.77\%}& \textbf{52.63\%$\pm$3.63\%}&\textbf{62.40\%$\pm$3.57\%}\\
 \hline\bottomrule[1.5pt]
    \end{tabular} }
    \label{tab:shakespere main result}
\end{table*}

\subsection{Performance Evaluation}

The experimental results of FreStega on the IMDB, XHS, and SHAKESPEARE datasets are shown in Tables~\ref{tab:Movie main result}, \ref{tab:XHS main result}, and~\ref{tab:shakespere main result}, respectively.
Note that the term ``RS'' refers to random sampling in language models without embedding any hidden messages, representing an ideal secure steganography baseline. In the main experiment, we set $c=0.1$ and $\alpha=0.1$. From these results, we have the following findings.
\subsubsection{Imperceptibility}
Even with advanced distribution-preserving methods like METEOR \cite{meteor2021} or DISCOP \cite{ding2023discop}, stego texts remain distinguishable from cover texts in real-world scenarios, as reflected by both MAUVE and steganalysis F1.
FreStega significantly improves imperceptibility across most scenarios, as evidenced by increased MAUVE scores and decreased detector F1. It better aligns the stego text distribution with human text in the target domain. We analyzed the word frequencies of stego texts generated using the baseline and FreStega. We observed that our method introduced some words that had not appeared before and that reflect distinct stylistic characteristics of the target corpus.
For example, words such as ``tempest'' and ``alas'', together with other Shakespearean expressions, appear after alignment. On the XHS platform, the language model learns to emulate the corresponding expressive style. Emojis appear more often in aligned XHS stego text.

\subsubsection{Linguistic Quality}

In terms of linguistic quality, it is evident that the diversity of stego texts generated by baseline methods is significantly lower than that of human text, particularly in the IMDB and SHAKESPEARE datasets, as reflected by $distinct_3$.
 A major reason existing baseline methods are easily detected is that their language model distributions are too sharp, causing the generated stego text to fall into a few fixed patterns. In contrast, FreStega adjusts the language model distribution to alleviate the sharpening phenomenon, effectively increasing the diversity of stego text in almost all scenarios.
Similarly, our method ensures the fluency of the generated text, with the perplexity being almost identical to that of baseline methods. \textcolor{black}{In most cases, since FreStega mitigates the distribution sharpness issue in language models, incorporating FreStega leads to a slight increase in perplexity, typically by no more than 5\%—except for ChatGLM3-6B. In a few cases, PPL may even decrease, but overall it remains roughly comparable to the baseline without FreStega.
}
Overall, FreStega effectively preserves the meaningfulness and coherence of the stego text, even as it enhances the entropy in the sequential alignment stage.

\subsubsection{Capacity}
In terms of capacity, FreStega consistently and reliably increases the embedding rate across all scenarios. FreStega achieved relative increases in embedding rate of 20.65\%, 10.07\%, and 15.94\% on the XHS, IMDB, and SHAKESPEARE datasets, respectively. This aligns with FreStega's intended goal of using sequential alignment to raise the entropy of the language model's distribution while maintaining text quality, thereby increasing the capacity upper bound over the original distribution. Experiments confirm that FreStega effectively increases the embedding rate while preserving text quality.

\begin{figure}[!t]
    \centering
    \includegraphics[width=1\linewidth]{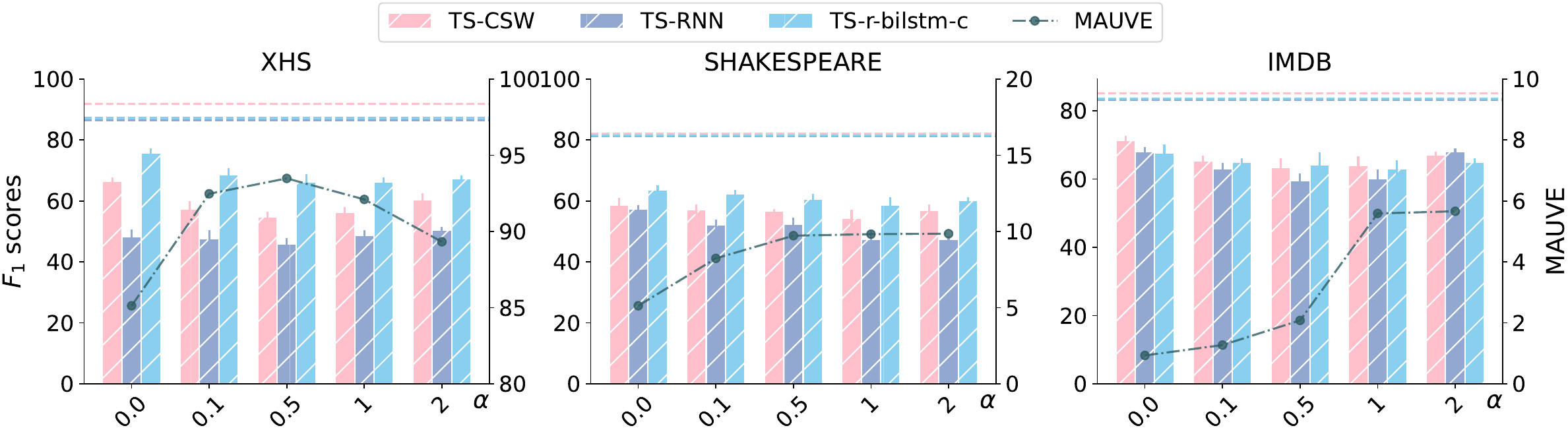}
    \caption{Hyperparameter analysis of $\alpha$ ($c$=0.1). We tested the F1 scores of classifiers on three datasets (XHS, SHAKESPEARE, IMDB) using three classic steganalysis methods (TS-CSW \cite{TS-CSW}, TS-RNN \cite{TS-RNN}, R-BiLSTM-C \cite{r-bilstm-c}). The corresponding colored horizontal lines indicate the F1 scores of the steganalysis methods without FreStega. The green dashed line represents the MAUVE score relative to the target-domain text.}
    \label{fig:alpha}
\end{figure}

\begin{figure}[!t]
    \centering
    \includegraphics[width=1\linewidth]{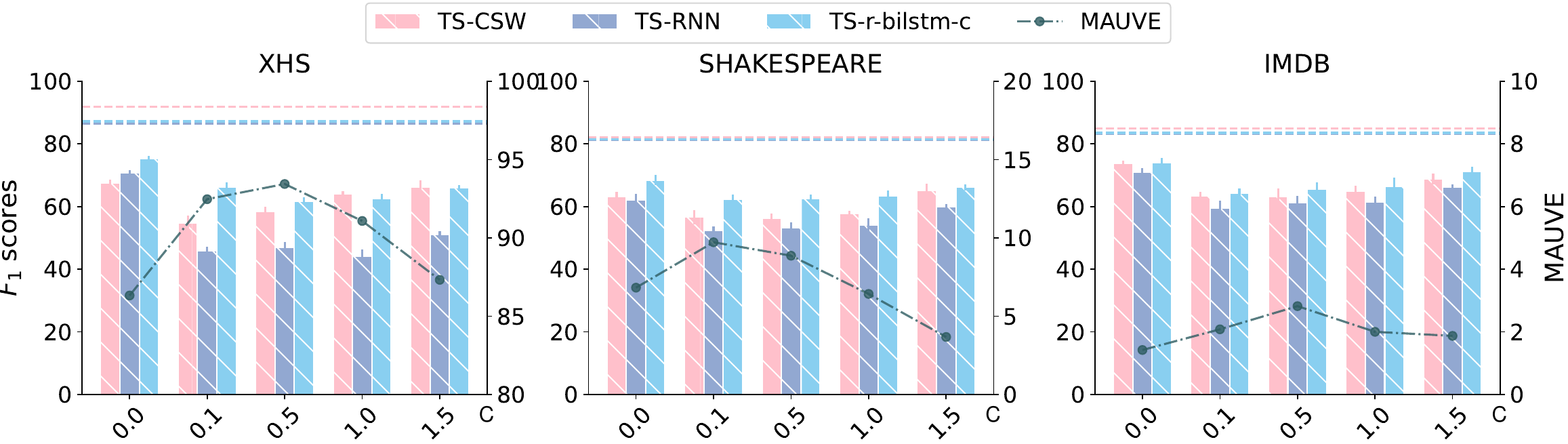}
    \caption{Hyperparameter analysis of $c$ ($\alpha$=0.1). We tested the F1 scores of classifiers on three datasets (XHS, SHAKESPEARE, IMDB) using three classic steganalysis methods (TS-CSW \cite{TS-CSW}, TS-RNN \cite{TS-RNN}, R-BiLSTM-C \cite{r-bilstm-c}). The corresponding colored horizontal lines indicate the F1 scores of the steganalysis methods without FreStega. The green dashed line represents the MAUVE score relative to the target-domain text.}
    \label{fig:c}
\end{figure}

\subsection{Control Intensity of Alignment}
\label{intensity}
For FreStega, we explored two key hyperparameters—spatial alignment intensity $\alpha$ and sequential alignment intensity $c$—using DISCOP on \textsc{Qwen2} as the representative steganography scheme with a base temperature of 1.

As Figure \ref{fig:alpha} shows, increasing $\alpha$ leads to better alignment with the target human text distribution, reflected in higher MAUVE values, while maintaining statistical imperceptibility. Although higher $\alpha$ enhances imperceptibility, the benefits stabilize, making extensive tuning unnecessary; empirically, $\alpha=0.1$ generally works well. The sequential alignment intensity \(c\) slightly adjusts the temperature based on the distribution's entropy, offering a global adjustment. While higher temperatures can increase text diversity, they may also add unnecessary noise. As shown in Figure \ref{fig:c}, we empirically set \(c = 0.1\) to improve embedding capacity while maintaining imperceptibility. \textcolor{black}{While not optimal, these hyperparameters perform well in most cases without further tuning.}

\begin{figure*}[!t]
    \centering
    \includegraphics[width=\textwidth]{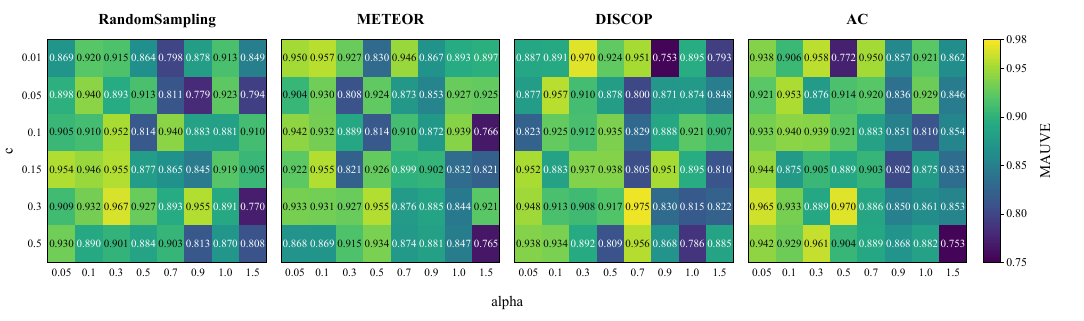}
   \caption{\textcolor{black}{Heatmaps showing MAUVE scores under different $(\alpha, c)$ settings for RandomSampling, METEOR, DISCOP, and AC on XHS.}}
    \label{fig:four-figs}
\end{figure*}

\textcolor{black}{
In practice, automatic hyperparameter optimization methods such as grid search can be employed. Specifically, we first generate small batches of stego text using FreStega on a small dataset (e.g., 500 samples) and then perform grid search over the hyperparameters $\alpha$ and $c$ on this subset to identify optimal configurations. The grid search results are presented in Figure~\ref{fig:four-figs}. FreStega achieves robust alignment with the target domain across a broad range of hyperparameter values (with MAUVE scores exceeding 0.8 in most cases). Although our default settings ($c = 0.1$, $\alpha = 0.1$) do not yield the absolute highest MAUVE scores, they consistently deliver strong performance across nearly all conditions.
}

\subsection{Quantity of Target Domain Texts Used for Alignment}

In the main experiment, we used all available target-domain text for alignment. In subsequent analyses, we discuss in detail the impact of the amount of target-domain data required for effective alignment.
We experimented with different quantities of target-domain text for alignment on the XHS dataset using two models, \textsc{Qwen2} and \textsc{ChatGLM3}, to assess the impact on stego text generation. The XHS dataset contains a total of 1,508 texts. To control for generation-size effects, we generated 2,000 texts for each setting. We then used 25, 50, 100, 300, 500, 700, 1,000, 1,300, and 1,500 target-domain texts for spatial alignment. The results are shown in Figures~\ref{fig:qwen-length} and~\ref{chatglm-length}.
The figures show that more alignment samples lead to better domain imperceptibility and higher fluency. FreStega requires only a small amount of target-domain text, around 100 samples, to effectively exhibit the characteristics of the target domain and closely match its distribution.
In practical deployment, if the domain distribution shifts over time, FreStega does not require model retraining; the token-frequency table can be incrementally updated by tokenizing newly collected samples and merging their counts with the existing statistics.
Even without access to target-domain data, the Sequential Adjustment of FreStega remains effective (the ablation results are shown in the supplementary material) and alleviates the inherent sharpening tendency of the language model. This fallback is particularly useful when the target corpus, such as XHS, exhibits high diversity.

\begin{figure*}[!t]
    \centering
    \includegraphics[width=1\linewidth]{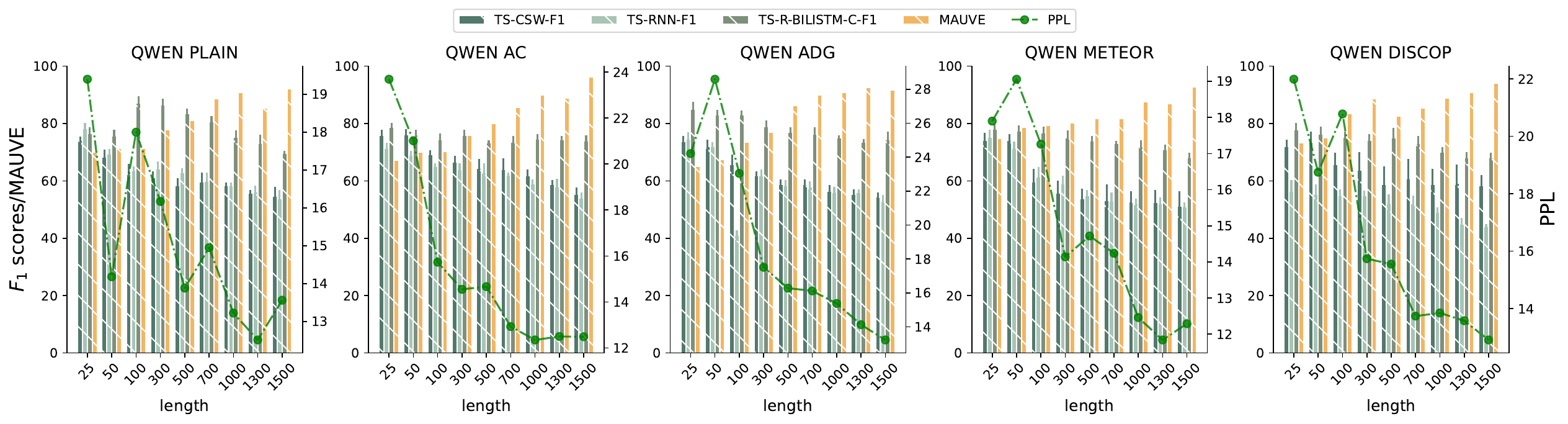}
    \caption{The effect of the number of target-domain texts used for alignment on \textsc{Qwen2} for XHS.}
    \label{fig:qwen-length}
\end{figure*}
\begin{figure*}[!t]
    \centering
    \includegraphics[width=1\linewidth]{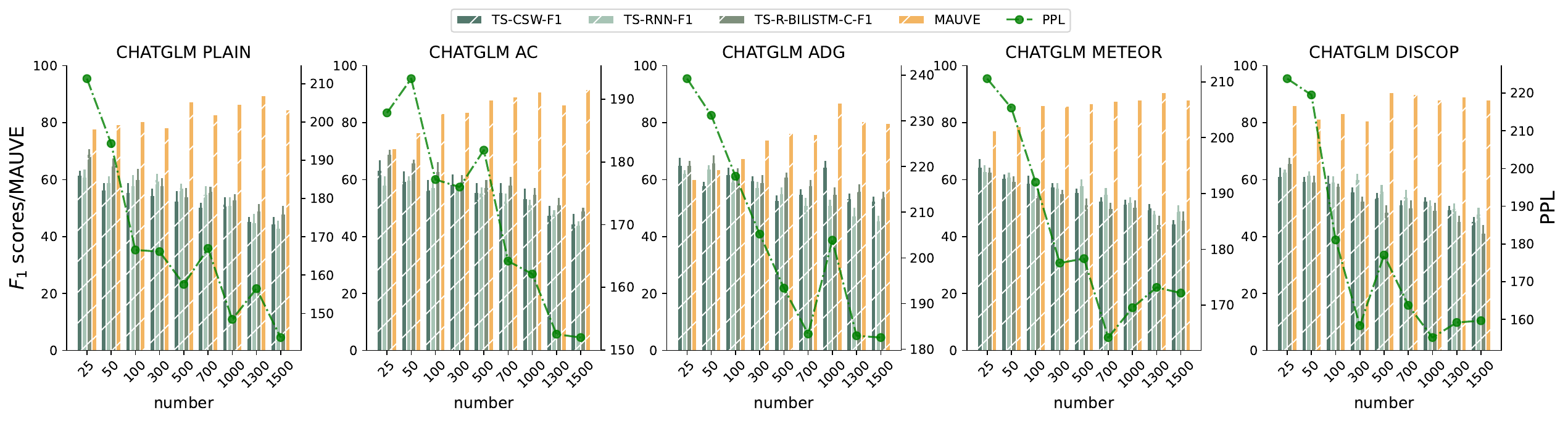}
    \caption{The effect of the number of target-domain texts used for alignment on \textsc{ChatGLM3} for XHS.}
    \label{chatglm-length}
\end{figure*}

\subsection{Applicability to LLMs of Different Sizes}
\label{modelsize}
FreStega is adaptable across model scales. In the main experiments, we primarily employed models around the 7B scale. To further evaluate scalability, we tested FreStega on the \textsc{Qwen2} series with different parameter sizes using zero-shot prompts on the XHS dataset, measuring MAUVE scores and embedding rates over 1,000 generated samples. As shown in Figure~\ref{fig:size}, FreStega remains effective across language models of different sizes while enhancing both imperceptibility and embedding capacity. The embedding rate drops when moving from smaller to larger models, reflecting the tendency of larger models to generate more deterministic distributions; this reinforces the importance of Sequential Adjustment for practical steganographic capacity.

\begin{figure}[!b]
    \centering
    \includegraphics[width=0.95\columnwidth]{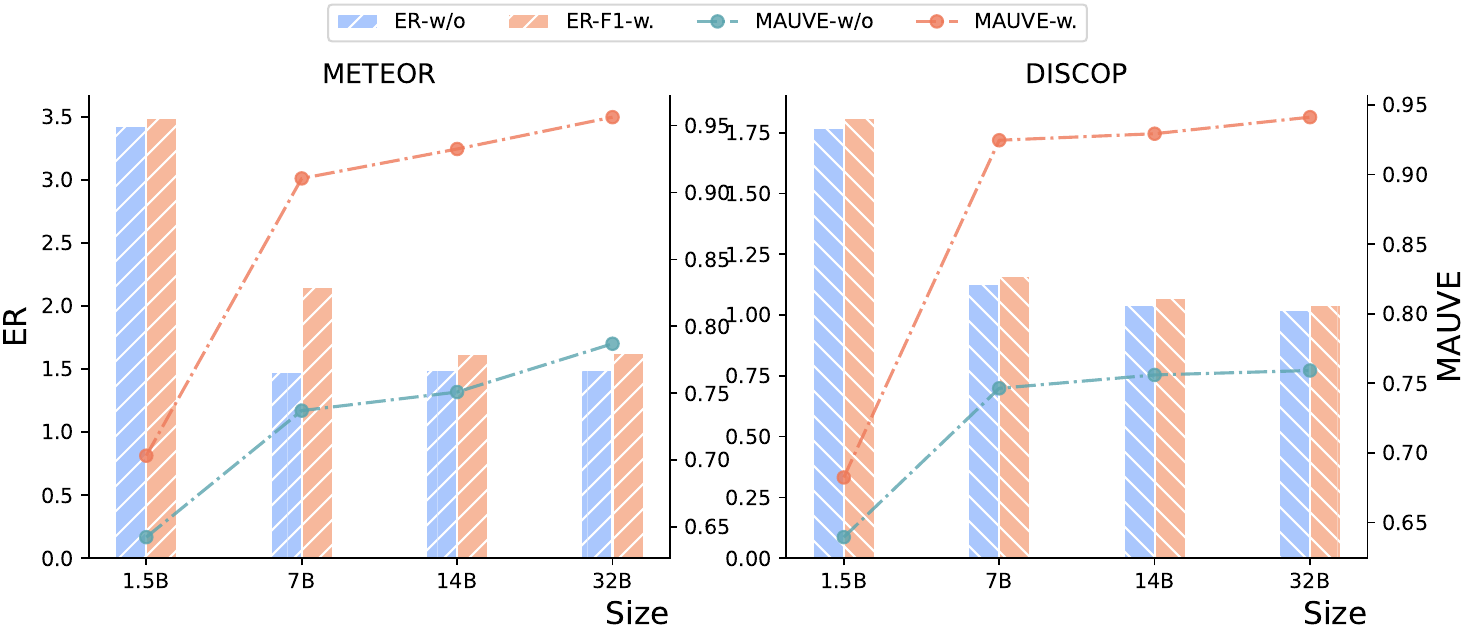}
    \caption{Results for stego texts generated by \textsc{Qwen2} models of different sizes.}
    \label{fig:size}
\end{figure}

\subsection{Time Efficiency and Dynamic Updates}
\label{timeefficiency}
We analyze time efficiency from two perspectives: online generation latency and offline preprocessing/update cost. Online generation latency reflects the runtime overhead during stego text generation, while offline preprocessing measures the one-time cost of building and refreshing the frequency statistics used by Spatial Adjustment.

\subsubsection{Online Generation Time}
FreStega only modifies token probabilities before sampling, so its online cost is limited. As shown in Table~\ref{time efficiency}, the ms/token values with and without FreStega are close across models and datasets. Paired \(t\)-tests further show that all \(p\)-values are above 0.05, indicating no statistically significant change in generation latency.

\begin{table*}[t]
    \centering
      \caption{Average generation latency (ms/token) for 1000 samples tested on different models across SHAKESPEARE, IMDB, and XHS.}
      \resizebox{\textwidth}{!}{
\begin{tabular}{lcccccccccccccccc}\toprule[1.5pt]\hline
\textbf{Dataset} & \multicolumn{5}{c}{\textbf{SHAKESPEARE}} & \multicolumn{5}{c}{\textbf{IMDB}} & \multicolumn{5}{c}{\textbf{XHS}} & \multirow{2}{*}{\textbf{p-value}}\\
  \cmidrule(lr){2-6} \cmidrule(lr){7-11} \cmidrule(lr){12-16}
         \textbf{Model} & \textbf{RS} & \textbf{AC\cite{ziegler2019neural}} & \textbf{ADG\cite{zhang2021provably}} & \textbf{Met.\cite{meteor2021}} & \textbf{Dis.\cite{ding2023discop}} & \textbf{RS} & \textbf{AC\cite{ziegler2019neural}} & \textbf{ADG\cite{zhang2021provably}} & \textbf{Met.\cite{meteor2021}} & \textbf{Dis.\cite{ding2023discop}} & \textbf{RS} & \textbf{AC\cite{ziegler2019neural}} & \textbf{ADG\cite{zhang2021provably}} & \textbf{Met.\cite{meteor2021}} & \textbf{Dis.\cite{ding2023discop}} \\\midrule
         \textsc{Llama2\cite{llama2}} w/o&  52.19&52.15 &61.13 & 52.43&56.72 & 58.54& 56.19& 59.79& 54.66&60.62 & -& -& -& -&-&\multirow{2}{*}{0.8640}\\
         \textsc{Llama2\cite{llama2}} w/& 52.07&53.19 & 62.27& 53.69&57.12 & 56.75& 55.85& 58.26& 55.72&60.12 &- &- &- &- &-&\\\midrule
 \textsc{Llama3\cite{llama3}} w/o& 63.17&61.19 & 102.99& 61.59&81.44 & 71.09& 72.47& 76.61& 62.74&81.56 & -&- &- &- &-&\multirow{2}{*}{0.5537}\\
 \textsc{Llama3\cite{llama3}} w/& 68.05&61.67 & 108.02& 64.58&88.42 & 63.74& 63.86& 76.87& 68.25&81.74 & -&- &- &- &-&\\\midrule
 \textsc{\textcolor{black}{Mistral v0.3}\cite{mistral}} w/o& 56.74&56.72 & 66.31& 56.93&65.29 & 58.13& 57.79& 63.45& 58.74&63.69 & -&- & -& -&-&\multirow{2}{*}{0.1351}\\
 \textsc{\textcolor{black}{Mistral v0.3}\cite{mistral}} w/& 57.85&56.64 & 66.13& 58.12&66.27& 58.43& 63.14& 63.41& 59.06&63.40 & -&- &- &- &-&\\\midrule
         \textsc{\textcolor{black}{Qwen2}\cite{qwen}} w/o& 58.49&57.74 & 106.05& 57.76&81.21 & 68.55& 66.01& 75.94& 60.47&84.67 & 58.72& 58.68& 150.28& 58.75&81.55&\multirow{2}{*}{0.3603}\\
           \textsc{\textcolor{black}{Qwen2}\cite{qwen}} w/& 62.27&58.35 & 109.24& 61.09&86.73 & 74.67& 66.89& 76.81& 62.31&84.59 & 60.84& 60.70& 120.23& 62.90&82.99&\\\midrule
 \textsc{\textcolor{black}{ChatGLM3}\cite{du2022glm}} w/o&- &- &- &- &- & -& -& -& -&- & 29.42& 29.76& 46.76& 30.71&38.87&\multirow{2}{*}{0.1199}\\
 \textsc{\textcolor{black}{ChatGLM3}\cite{du2022glm}} w/&- &- & -& -& -& -& -& -&- &- & 30.17& 30.40& 46.98& 30.43&39.35&\\
 \hline\bottomrule[1.5pt]
    \end{tabular}}
    \label{time efficiency}
\end{table*}

\subsubsection{Offline Preprocessing and Dynamic Updates}
\label{preprocessingcost}
Before Spatial Adjustment, FreStega builds token-frequency tables for the target domain and model-sampled texts. This is CPU tokenizer counting, not model training or detector inference. Table~\ref{preprocessing} reports representative costs with the \textsc{Qwen2} tokenizer. The table is reusable across generation runs, so the one-time cost is amortized. For domain drift, FreStega only tokenizes newly collected samples and merges their counts into the existing table; no retraining or full-corpus recomputation is required.

Due to space constraints, additional experiment details, compatibility analysis, an ablation study, and qualitative case studies are included in the supplementary material.

\begin{table}[!t]
    \centering
    \caption{Preprocessing time for target-domain frequency statistics.}
    \setlength{\tabcolsep}{4pt}
    \renewcommand{\arraystretch}{1.0}
    \begin{tabular}{@{}lrrr@{}}
    \toprule[1.5pt]\hline
    \textbf{Dataset} & \textbf{Entries} & \textbf{Tokens} & \textbf{Time (s)}\\
    \midrule
    XHS & 1,508 & 588,334 & 1.69\\
    IMDB & 1,039,403 & 25,809,893 & 73.04\\
    SHAKESPEARE & 18,395 & 241,212 & 1.04\\
    \hline\bottomrule[1.5pt]
    \end{tabular}
    \label{preprocessing}
\end{table}

\FloatBarrier

\newcommand{\FreStegaCaseStudy}{
\subsection{Case Study}
\label{casestudy}
Tables~\ref{casestudyIMDB}, \ref{casestudyShake}, and~\ref{casestudyXHS} present selected human references, stego texts generated without FreStega, and stego texts generated with FreStega for IMDB, SHAKESPEARE, and XHS, respectively.
The examples show that FreStega makes the generated stego texts more consistent with target-domain language habits, including colloquial IMDB phrasing, Shakespearean expressions, and emojis and tags in XHS. These qualitative results complement the quantitative imperceptibility evaluation by illustrating how distribution alignment changes the generated text in practical scenarios.

\begin{table*}[!t]
\begin{minipage}[t]{0.48\textwidth}
    \centering
    \caption{Examples in IMDB using \textsc{Qwen2}+DISCOP.}
    \label{casestudyIMDB}
    \footnotesize
    \setlength{\tabcolsep}{3pt}
    \renewcommand{\arraystretch}{0.98}
    \begin{tabularx}{\linewidth}{@{}>{\centering\arraybackslash}p{0.20\linewidth}X@{}}
        \toprule[1.5pt]
        \textbf{Method} & \multicolumn{1}{c}{\textbf{Generated Text}} \\
        \midrule
        Human & the movie was ok, but the special fx are cheesy and the singing and dancing is mediocre. \\\midrule
        w/o FreStega & A captivating, suspenseful film that masterfully combines thrilling action sequences with intricate plot twists and compelling characters. \\\midrule
        w/ FreStega & i liked it more than i expected, tho the ending was kinda dumb and the fx looked cheap.\\
        \bottomrule[1.5pt]
    \end{tabularx}
\end{minipage}\hfill
\begin{minipage}[t]{0.48\textwidth}
    \centering
    \caption{Examples in SHAKESPEARE using \textsc{Qwen2}+DISCOP.}
    \label{casestudyShake}
    \footnotesize
    \setlength{\tabcolsep}{3pt}
    \renewcommand{\arraystretch}{0.98}
    \begin{tabularx}{\linewidth}{@{}>{\centering\arraybackslash}p{0.20\linewidth}X@{}}
        \toprule[1.5pt]
        \textbf{Method} & \multicolumn{1}{c}{\textbf{Generated Text}} \\
        \midrule
        Prompt & You are a model that converts all of my input normal text into Shakespearean styled text. My input text is: I have half a mind to hit you before you speak again. \\\midrule
        Human & I have half a mind to strike thee ere thou speak'st again.\\\midrule
        w/o FreStega & Thy next word might provoke mine hand, anon. \\\midrule
        w/ FreStega & I am half resolved to strike thee ere thou utter'st another word.\\
        \bottomrule[1.5pt]
    \end{tabularx}
\end{minipage}
\vspace{1.5em}

    \centering
    \caption{Examples in XHS using \textsc{Qwen2}+DISCOP.}
    \label{casestudyXHS}
    \footnotesize
    \setlength{\tabcolsep}{3pt}
    \renewcommand{\arraystretch}{0.98}
    \begin{tabularx}{\textwidth}{@{}>{\centering\arraybackslash}p{0.12\textwidth}X@{}}
        \toprule[1.5pt]\hline
        \textbf{Method} & \multicolumn{1}{c}{\textbf{Generated Text}} \\
        \midrule
        Prompt & 请以小红书博主的口吻，以挑战15天马甲线第13天为标题写一篇小红书分享\\\midrule
        Human & 标题：\includegraphics[height=1em]{figures/emoji/power.png} 15天马甲线倒计时第13天 \includegraphics[height=1em]{figures/emoji/happy.png} 第13天打卡！今天照镜子，小腹还是有一点点，但腰两侧好像紧了一些[害羞R][害羞R] 最明显的变化不是体重，而是做核心时没以前那么抖了，卷腹也终于不是脖子在发力,发力点对了效果快\includegraphics[height=1em]{figures/emoji/happy.png} \includegraphics[height=1em]{figures/emoji/pray.png} 今天练了平板支撑、死虫、反向卷腹和俄罗斯转体\includegraphics[height=1em]{figures/emoji/dance.png} 后两组真的酸到想躺平，但还是坚持做完了！\includegraphics[height=1em]{figures/emoji/fire.png} \includegraphics[height=1em]{figures/emoji/ok.png} 饮食也没节食，正常吃饭，少喝奶茶少吃夜宵，早餐会吃鸡蛋和咖啡，午饭正常吃肉和菜，晚上稍微清淡一点。
偶尔嘴馋还是会想吃零食，但我现在不会因为吃了一点就自暴自弃了。吃了就吃了，下一顿继续正常吃，心态比以前稳很多。 \includegraphics[height=1em]{figures/emoji/salad.png}\includegraphics[height=1em]{figures/emoji/apple.png} 还有2天，希望能看到一点点线条！\includegraphics[height=1em]{figures/emoji/good.png} \#15天马甲线挑战 \#马甲线打卡 \#居家运动 \#核心训练 \#减脂日记
        \\\midrule
        w/o FreStega & 标题：\includegraphics[height=1em]{figures/emoji/power.png} 15天马甲线挑战：Days 1-13小记:坚持就是胜利！宝贝们！今天是马甲线挑战的第13天了！简直不敢相信时间过得这么快！(\#马甲线挑战 \#15天挑战）这13天中，我感到了自己的变化，从一开始的困惑和犹豫到现在的自信和坚持。\includegraphics[height=1em]{figures/emoji/fire.png}这距离第一次开始计时已经过了快两周，开始我只觉得这个计划好遥远。但现在，我每天醒来都充满期待，期待自己今天的进步和挑战更强大的自己。每一天的运动都是对自己的一次小挑战：腹部训练、瑜伽、深蹲、仰卧起坐……每一个动作虽然简单，但汇聚在一起就是改变。我开始留意肚皮在收缩时的感觉，那种奇妙的奇迹感让我每次都会露出满足的微笑。失砂の岁月，魔界丶鹤烽，而这不仅仅是一种外观的改变，它也是心态的转变。我知道，真正的改变是从心中开始的。每天看着镜子中的自己，我变得更加自信，挑战身体极限。\\\midrule
        w/ FreStega & 标题：\includegraphics[height=1em]{figures/emoji/power.png} 15天马甲线倒计时第13天：身体的小秘密，只对坚持的人展示！\includegraphics[height=1em]{figures/emoji/star.png}正文: 大家好！亲爱的小红书的小伙伴们\includegraphics[height=1em]{figures/emoji/hand.png}，我是你们的健身小伙伴，今天继续和大家分享我的马甲线挑战进度啦! \includegraphics[height=1em]{figures/emoji/fire.png} 第13天，我已经感觉自己和马甲线之间的距离一天比一天近! \includegraphics[height=1em]{figures/emoji/run.png}\includegraphics[height=1em]{figures/emoji/weiqi.png} 每次做起来腹部动作时，总能感觉到肌肉在暗示我：你越来越接近目标了呢~ \includegraphics[height=1em]{figures/emoji/happy.png}这13天的经历，让我深刻认识到坚持的力量！力量远远超出了我的想象。每一天的付出都在预扣着未来的好身材, 我得珍惜每一次的汗水\includegraphics[height=1em]{figures/emoji/power.png}。为了今天的效果，昨晚的晚餐我选择了清淡的沙拉搭配鸡胸肉，今天早餐也是高蛋白的蛋白饼，都是为了给肌肉更好的恢复和生长环境。\includegraphics[height=1em]{figures/emoji/salad.png}\includegraphics[height=1em]{figures/emoji/egg.png }营养永远是健身的伴侣，这13天我学会了更多关于饮食的知识！\includegraphics[height=1em]{figures/emoji/think.png}摄入要讲究，摄入要健康！明天就是这15天挑战的最后两天啦！激动又紧张，不过我知道成功要靠自己一步一步走过去，我不会放弃的！\includegraphics[height=1em]{figures/emoji/laba.png} \#练出腹肌马甲线\#瘦全身\\
        \bottomrule[1.5pt]
    \end{tabularx}
\end{table*}

}

\section{Conclusion}
We propose a plug-and-play probability distribution reconstruction method to enhance the imperceptibility and embedding capacity of generative linguistic steganography in real-world scenarios.
\textcolor{black}{State-of-the-art distribution-preserving steganography methods are closely tied to the language model’s native distribution, leading to a mismatch with real-world cover texts.
To address the resulting imperceptibility gap, we reconstruct the model distribution to align with the target domain across both sequential and spatial dimensions, enabling a more suitable and realistic channel for steganographic sampling.} FreStega seamlessly integrates with existing generative steganography algorithms and various language models.
Stego texts generated by FreStega demonstrate higher embedding capacity while maintaining linguistic quality, exhibiting strong imperceptibility against various steganalysis detectors in real-world scenarios.

\bibliographystyle{IEEEtran}
\FloatBarrier
{\footnotesize
\makeatletter
\let\FreStegaOriginalLBibitem\@lbibitem
\def\@lbibitem[#1]#2{
  \ifnum\c@NAT@ctr=37\relax\newpage\fi
  \FreStegaOriginalLBibitem[#1]{#2}}
\makeatother
\bibliography{sample,IEEEabrv}
}

\end{CJK*}
\end{document}